\let\oldsqrt\sqrt
\def\hksqrt{\mathpalette\DHLhksqrt}
\def\DHLhksqrt#1#2{\setbox0=\hbox{$#1\oldsqrt{#2\,}$}\dimen0=\ht0
   \advance\dimen0-0.2\ht0
   \setbox2=\hbox{\vrule height\ht0 depth -\dimen0}%
   {\box0\lower0.4pt\box2}}
\renewcommand\sqrt\hksqrt
\renewcommand{\leq}{\leqslant}
\renewcommand{\geq}{\geqslant}
\renewcommand{\le}{\leqslant}
\renewcommand{\ge}{\geqslant}
\providecommand{\ignore}[1]{} 
\newcommand*{\Otilde}{\widetilde{O}}
\newcommand{\D}{{\cal D}}
\newcommand{\F}{{\cal F}}
\newcommand{\G}{{\cal G}}
\renewcommand{\H}{{\cal H}}
\newcommand{\Y}{{\cal Y}}
\newcommand{\level}{\textsc{Level}}
\newcommand{\m}{{-}}
\newcommand{\kpath}{$k$-path\xspace}
\newcommand{\kpaths}{$k$-paths\xspace}
\newcommand{\ftkpath}{\textsc{$f$-EFT-$k$-Path}\xspace}
\newcommand{\ftkvc}{\textup{$f$-EFT-$k$-VC}\xspace}
\newcommand{\prob}[1]{\mathbb{P} (#1)}
\newcommand*{\nwspace}{\hspace*{.1em}} 
\author{Davide Bilò}%
{Department of Humanities and Social Sciences, University of Sassari, Italy}%
{davidebilo@uniss.it}%
{https://orcid.org/0000-0003-3169-4300} 
{This work was partially supported by the Research Grant FBS2016\_BILO, funded by ``Fondazione di Sardegna'' in 2016.} 
\author{Katrin Casel}%
{Hasso Plattner Institute, University of Potsdam, Germany}%
{katrin.casel@hpi.de}%
{https://orcid.org/0000-0001-6146-8684} 
{} 
\author{Keerti Choudhary}%
{Department of Computer Science and Engineering, Indian Institute of Technology Delhi, India}%
{keerti@iitd.ac.in}%
{https://orcid.org/0000-0002-8289-5930} 
{} 
\author{Sarel Cohen}%
{School of Computer Science, Tel-Aviv-Yaffo Academic College, Israel}%
{sarelco@mta.ac.il}%
{https://orcid.org/0000-0003-4578-1245} 
{} 
\author{Tobias Friedrich}%
{Hasso Plattner Institute, University of Potsdam, Germany}%
{tobias.friedrich@hpi.de}%
{https://orcid.org/0000-0003-0076-6308} 
{} 
\author{J.A.~Gregor Lagodzinski}%
{Hasso Plattner Institute, University of Potsdam, Germany}%
{gregor.lagodzinski@hpi.de}%
{https://orcid.org/0000-0002-8771-1870} 
{} 
\author{Martin Schirneck}%
{Hasso Plattner Institute, University of Potsdam, Germany}%
{martin.schirneck@hpi.de}%
{} 
{} 
\author{Simon Wietheger}%
{Hasso Plattner Institute, University of Potsdam, Germany}%
{simon.wietheger@student.hpi.de}%
{https://orcid.org/0000-0002-0734-0708} 
{} 
\authorrunning{D.~Bilò, K.~Casel, K.~Choudhary, S.~Cohen, T.~Friedrich, M.~Schirneck, and S.~Wietheger}
\title{Fixed-Parameter Sensitivity Oracles}
\keywords{data structures, distance preservers, distance sensitivity oracles, fault tolerance, fixed-parameter tractability, \kpath, vertex cover}
\begin{document}

\maketitle

\begin{abstract}
	We combine ideas from distance sensitivity oracles (DSOs) and fixed-parameter tractability (FPT)
to design sensitivity oracles for FPT graph problems.
An oracle with sensitivity $f$ for an FPT problem $\Pi$ on a graph $G$ with parameter $k$ preprocesses $G$ in time $O(g(f,k) \cdot \poly(n))$. When queried with a set $F$ of at most $f$ edges of $G$, the oracle reports the answer to the $\Pi$\hspace*{.25pt}--\hspace*{.25pt}with the same parameter $k$\hspace*{.25pt}--\hspace*{.25pt}on the graph $G-F$, i.e., $G$ deprived of $F$.
The oracle should answer queries in a time that is significantly faster than merely running
the best-known FPT algorithm on $G-F$ from scratch.

We design sensitivity oracles for the \textsc{$k$-Path} and the \textsc{$k$-Vertex Cover} problem. Our first oracle for \textsc{$k$-Path} has size $O(k^{f+1})$ 
and query time $O(f\min\{f, \log (f) + k\})$.
We use a technique inspired by the work of Weimann and Yuster [FOCS 2010, TALG 2013] on distance sensitivity problems to reduce the space to $O\big(\big(\frac{f+k}{f}\big)^f \big(\frac{f+k}{k}\big)^k fk \cdot \log n\big)$
at the expense of increasing the query time to $O\big(\big(\frac{f+k}{f}\big)^f \big(\frac{f+k}{k}\big)^k f \min\{f,k\} \cdot \log n \big)$. Both oracles can be modified to handle vertex-failures, but we need to replace $k$ with $2k$ in all the claimed bounds.

Regarding \textsc{$k$-Vertex Cover}, we design three oracles offering different trade-offs between the size and the query time.
The first oracle takes $O(3^{f+k})$ space and has $O(2^f)$ query time,
the second one has a size of $O(2^{f+k^2+k})$ and a query time of $O(f{+}k^2)$; finally, the third one takes $O(fk+k^2)$ space and can be queried in time $O(1.2738^k + f)$.
All our oracles are computable in time (at most) proportional to their size and the time needed
to detect a $k$-path or $k$-vertex cover, respectively.
We also provide an interesting connection between \textsc{$k$-Vertex Cover} and the fault-tolerant shortest path problem, by giving a DSO of size $O(\poly(f,k) \cdot n)$  with query time in $O(\poly(f,k))$, where $k$ is the size of a vertex cover.

Following our line of research connecting fault-tolerant FPT and shortest paths problems, 
we introduce parameterization to the computation of distance preservers.
We study the problem, given a directed unweighted graph with a fixed source $s$
and parameters $f$ and $k$,
to construct a polynomial-sized oracle that efficiently reports,
for any target vertex $v$ and set $F$ of at most $f$ edges,
whether the distance from $s$ to $v$ increases at most by an additive term of $k$ in $G-F$.
The oracle size is $O(2^kk^2 \cdot n)$, while the time needed to answer a query is $O(2^kf^\omega k^\omega)$, where $\omega<2.373$ is the matrix multiplication exponent.
The second problem we study is about the construction of bounded-stretch fault-tolerant preservers. We construct a subgraph with $O(2^{fk+f+k} \, k \cdot n)$ edges that preserves those $s$-$v$-distances that do not increase by more than $k$ upon failure of $F$. This improves significantly over the $\widetilde{O}(f n^{2-\frac{1}{2^f}})$ bound in the unparameterized case by Bodwin et al.~[ICALP 2017].
\end{abstract}

\section{Introduction}
\label{sec:intro}

There is a large amount of research in computer science that is devoted to the problems of computing shortest paths and distances in graphs that are subject to a small number of transient failures (see \cite{AhmedBSHJKS20,Bilo21Space-efficient,BCFS21SingleSourceDSO,BodwinDR21, GrandoniVWilliamsFasterRPandDSO, GuRen21ConstructingDSO, Parter15, ParterP13, ParterP16, Ren22Improved, Weimann_Yuster_2013} and the references therein). 
These problems usually ask to compute suitable subgraphs, a.k.a. {\em fault-tolerant spanners}, of the input graph that preserve (approximate) shortest paths even in the presence of edge/vertex failures, or to design data structures, a.k.a {\em distance sensitivity  oracles} (DSOs), that can quickly report such shortest paths when they are queried. In particular DSOs are extremely useful in all those scenarios where we promptly need to retrieve shortest paths in fault-prone networks every time new transient faults are detected. 
As one would expect, depending on the specific problem we are looking for, there are tradeoffs among the time needed by the precomputation algorithm to build the DSO and the time and space required by the DSO to answer to a query.

Most of the research on fault-tolerant spanners and DSOs has focused on the prominent case in which the maximum number of faulty network components observed at any time is a parameter that is independent of the input size.

In this work, we combine ideas from DSOs and fixed-parameter tractability to design sensitivity oracles (SOs) for fixed-parameter tractable (FPT) problems on graphs. More precisely, in the $f$-edge (resp., $f$-vertex) fault-tolerant SO for an FPT problem $\Pi$, we are given an instance $\Pi(G,k)$ of $\Pi$, where $G$ is the $n$-vertex input graph and $k$ is the parameter, and we want to develop a preprocessing algorithm that builds a data-structure (the oracle) that, when queried on a set $F$ of at most $f$ edges (resp., vertices), is able to report the solution of the FPT problem instance $\Pi(G{-}F,k)$, where $G-F$ denotes the graph $G$ deprived of the edges (resp., vertices) in $F$. 


The three important parameters for which we want to find good tradeoffs are (i) the time needed by the query algorithm, (ii) the running time of the preprocessing algorithm, and (iii) the size of the oracle. 
The goal of this paper is to exploit the power of FPT techniques to build fault-tolerant SOs with the following properties. (1) They can answer to queries in a time that is significantly faster than the one required by merely running the best-known FPT algorithm for $\Pi$ on the instance $\Pi(G-F,k)$. (2) Both the running time of the preprocessing algorithm and the size of the oracle must be $O(g(f,k) \cdot \poly(n))$.





To the best of our knowledge Lochet et al.~\cite{lochet} were the first ones who combined ideas from the realm of fixed-parameter tractability to problems related to the computation of fault-tolerant spanners. More precisely, they provided kernels for the problem of computing fault-tolerant $(S \times S)$-reachability spanners in directed graphs, where the size of the kernel depends on the number of faults and other graph parameters, like the independence number and the order of strongly connected components. Recently, Misra~\cite{faulttfvs} considered a fault-tolerant version of the \textsc{Feedback Vertex Set} problem where a single vertex can fail and solved the problem via an FPT algorithm parameterized by the solution size.

Parameterized approaches have been used for the related setting of dynamic algorithms. The crucial difference between dynamic structures and fault tolerant structures is that in the dynamic setting, changes to the input arrive one by one, and there is no bound $f$ on the number of changes. For dynamic structures, one also analyzes update and query time separately. Among the parameterized approaches for dynamic structures, Iwata and Oka~\cite{dynamicfpt2} considered a dynamic setting where both addition and removal of edges are allowed, and addressed problems -- parameterized by the solution size  -- like \textsc{Cluster Vertex Deletion}, \textsc{Feedback Vertex Set}, \textsc{Chromatic Number} and \textsc{Vertex Cover}. In particular for \textsc{Vertex Cover}, they use the Buss rule to only store a subgraph of size in $O(k^2)$ and answer queries independent of $n$ in $O(1.2738^k+k^3)$ by running the best parameterized algorithm for \textsc{Vertex Cover} on this subgraph. For this approach, it is crucial that the input fulfils the promise that updates never create a graph which does not contain a vertex cover bounded by $k$. Alman, Mnich and Vassilevska Williams~\cite{dynamicfpt} improved this result by not requiring this promise. They further studied a plethora of parameterized problems under updates with the objective to give an update/query time with lowest possible dependency on the input size, i.e., possibly only depending on $k$, or at least depending on a factor of the form $f(k)n^{o(1)}$. Among those,  they also considered the \textsc{$k$-Path} problem and give a structure with query time of $O(k!2^{O(k)\log n})$. Chen et al.~\cite{dynamickpath} developed a dynamic graph structure supporting edge insertions and deletions that in allows to recompute a \kpath with amortised update $O(2^{O(k^2)})$ time (plus time for standard graph operations, depending on the data structure). 

\subsection{Our contribution}
\label{subsec:intro_contribution}

We develop $f$-edge/vertex fault-tolerant sensitivity oracles for different FPT problems. The first problem we consider is the well-known \textsc{$k$-Path} problem, which is the decision problem where we want to know whether a (potentially directed) graph $G$ contains a simple path on exactly $k$ vertices (a.k.a.\ a \kpath). 
We denote the edge fault-tolerant variant of this problem with sensitivity $f$ by \ftkpath,
and $f$-VFT-$k$-\textsc{Path} for the variant with vertex failures.
We design two $k$-Path Sensitivity Oracles ($k$-PSOs) for those problems, providing various trade-offs between space requirement\footnote{%
	Throughout this paper, we measure space in the number of machine words on $O(\log n)$ bits,
	where $n$ is the number of vertices in the input graph.
}
of the oracle and its query time. 
In the following, we only provide the main results for the edge-failure scenario. Similar results can be proved for the case in which we consider vertex faults, although the parameter $k$ should then be replaced by $2k$ in all statements.\footnote{We observe that, once we have a solution that handles only edge failures, we can reuse it to handle also vertex failures because of the following graph transformation. We split every vertex $v$ of $G$ into two vertices $v_{\text{in}}$ and $v_{\text{out}}$, where the former inherits all in-edges of $v$ and the latter all out-edges. The failure of vertex $v$ is now equivalent to considering the failure of the edge $(v_{\text{in}},v_{\text{out}})$. This changes the graph size by at most a constant factor of 2, while the parameter $k$ doubles.}

We obtain the following results for the \ftkpath problem
(see Section \ref{sec:k-path} for the technical details).
First we use fault-tolerant lookup trees to design a $k$-Path Sensitivity Oracle whose query time is at most quadratic in $f$ and $k$
and independent of the size of the input graph.
A similar technique has been applied by Chechik et al.~\cite{Chechik17ApproximateSensitiveDistanceOracles}
to design approximate DSOs for multiple edge faults.
The query time of our oracle can be adjusted to the relative sizes of $f$ and $k$.
In every node of the lookup tree, we compute a \kpath as a witness.
For deterministic data structures, we use the algorithm by Tsur~\cite{Tsur_2019} that computes a \kpath in $O(2.554^k \cdot \poly(n))$ time; if we allow randomization, we can use the current-best randomized algorithm of Williams~\cite{Williams_2009} that runs in  $O(2^k \cdot \poly(n))$ time.

\begin{restatable}{theorem}{thm:intro_kpath_tree} \label{thm:intro_kpath_tree}
	There is a deterministic 
	$k$-Path Sensitivity Oracle whose size is $O(k^{f+1})$,
	with a query time of $O(f \cdot \min\{f, \log(f) + k\})$. The deterministic preprocessing algorithm that builds the oracle runs in $O(k^f \, 2.554^k \cdot \poly(n))$ time. If we allow randomization, then the preprocessing algorithm requires $O(k^f \, 2^k \cdot \poly(n))$ time.
\end{restatable}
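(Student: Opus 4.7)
The plan is to construct a rooted fault-tolerant lookup tree $T$ of depth at most $f$ and branching factor at most $k-1$. With each node $u$ we associate a set $F_u$ of edges (the failures assumed along the root-to-$u$ path; $F_{\text{root}} = \emptyset$) and a witness $k$-path $P_u \subseteq G - F_u$, whenever such a path exists. If $P_u$ is present, we attach one child per edge $e$ of $P_u$, with $F_{\text{child}} = F_u \cup \{e\}$; otherwise, or once depth $f$ is reached, $u$ is a leaf. Branching $k-1$ and depth $f$ give $O(k^f)$ nodes, and the $O(k)$ words needed to store $P_u$ bring the total size to $O(k^{f+1})$. The witnesses are computed node by node using the fastest available \textsc{$k$-Path} algorithm, namely Tsur's $O(2.554^k \cdot \poly(n))$ deterministic routine or Williams's $O(2^k \cdot \poly(n))$ randomized one, accounting for the claimed preprocessing bounds.

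Given a query set $F$ of at most $f$ failed edges, the oracle descends $T$ from the root, testing at each visited node $u$ whether $F \cap P_u = \emptyset$. If yes, $P_u$ is returned: the invariant $F_u \subseteq F$ is preserved throughout the descent, so $P_u \subseteq G - F_u$ together with $P_u \cap F = \emptyset$ gives $P_u \subseteq G - F$, i.e., $P_u$ is a valid $k$-path in $G - F$. Otherwise, any $e \in F \cap P_u$ is selected and the descent continues into the $e$-child, extending $F_u$ by an edge genuinely contained in $F$. Two boundary situations are easily handled: a node without a witness certifies that $G - F_u$, and hence $G - F \subseteq G - F_u$, contains no $k$-path; and reaching a node at depth $|F|$ forces $F_u = F$, so its stored $P_u$ directly resolves the query. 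Correctness therefore reduces to the single invariant $F_u \subseteq F$, trivially maintained by each recursive step.

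For the per-node work we compare two strategies and apply whichever is cheaper for the given parameters. Strategy A first builds a searchable structure on $F$ supporting constant-time membership in $O(f \log f)$ time, and then, at each of the at most $f$ visited nodes, iterates through $P_u$ with $O(1)$ per edge, costing $O(k)$ per node and $O(f \log f + fk) = O(f(\log f + k))$ overall. Strategy B instead precomputes a small dictionary on every $P_u$ during preprocessing\emDash fitting inside the $O(k)$ per-node space budget\emDash and at query time iterates through $F$ at every visited node, spending $O(f)$ per node and $O(f^2)$ in total. Taking the minimum yields the advertised $O(f \cdot \min\{f, \log(f)+k\})$ query time.

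The design is conceptually clean, and the main technical care lies in (i) arguing that committing one edge of $F$ to $F_u$ at each descent step suffices to exhaust all failed edges within $f$ levels, and (ii) selecting deterministic dictionary constructions that actually match the $\log f$ term without resorting to randomization. Neither is a genuine obstacle; both follow from standard parameterized and data-structural techniques, and the bulk of the argument is really a bookkeeping exercise on the tree's depth, branching factor, and per-node storage.
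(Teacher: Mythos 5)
Your proposal matches the paper's argument essentially exactly: the same fault-tolerant lookup tree with depth at most $f$ and branching factor at most $k-1$, the same invariant $F_u \subseteq F$ for the query descent, the same two per-node strategies (dictionary on $F$ vs.\ dictionary on each stored $k$-path, with the $O(f\log f + fk)$ and $O(f^2)$ costs respectively), and the same use of Tsur's and Williams's algorithms for the witness computations. The only cosmetic difference is that the paper presents the $O(f^2)$ bound as the main lemma and the $O(f\log f + fk)$ bound as a corollary, whereas you combine both up front into the $\min$ expression.
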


We note that the time required to construct the deterministic oracle in \autoref{thm:intro_kpath_tree} favorably compares with those achieved for the more general dynamic setting by Alman, Mnich and Vassilevska Williams~\cite{dynamicfpt} (query time of $O(k!2^{O(k)}\log n)$ even for undirected graphs) and Chen et al.~\cite{dynamickpath} ($O(2^{O(k^2)})$ amortized update time).

Furthermore, inspired by the work of Weimann and Yuster~\cite{Weimann_Yuster_2013},
we show that the preprocessing time and space of the data structure
can be significantly reduced for many ranges of $f$ and $k$, at the expense of a larger query time.

\begin{restatable}{theorem}{thm:intro_kpath_subgraphs}
\label{thm:intro_kpath_subgraphs}
	There is a randomized 
	$k$-PSO whose size is $O\big(\big(\frac{f+k}{f} \big)^f \big(\frac{f+k}{k} \big)^k fk \log n \big)$, with a query time of $O\big(\big(\frac{f+k}{f} \big)^f \big(\frac{f+k}{k} \big)^k f \min\{f,k\} \log n\big)$. The time needed by the preprocessing algorithm to build the oracle is $O\big(\big(\frac{f+k}{f} \big)^f \big(\frac{f+k}{k} \big)^k f \nwspace 2^k \nwspace \poly(n)\big)$.
\end{restatable}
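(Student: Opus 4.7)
The plan is to adapt the random subgraph sampling paradigm of Weimann and Yuster to our FPT setting. For an appropriate sampling probability $p \in (0,1)$ and integer $N$, both chosen below, I sample independently $N$ random edge-subgraphs $G_1,\ldots,G_N$ of $G$, where each $G_i$ retains every edge of $G$ independently with probability $p$. During preprocessing, for every $i$ I invoke the randomized $O(2^k \poly(n))$-time $k$-path algorithm of Williams on $G_i$ and store a single witness $k$-path $P_i \subseteq G_i$ (if one exists), which costs $O(k)$ words per subgraph. The oracle answers a query on a fault set $F$ by scanning the stored witnesses and returning the first $P_i$ with $E(P_i) \cap F = \emptyset$; if none exists, it reports that $G{-}F$ admits no $k$-path.

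Correctness reduces to the following event: for every $F$ with $|F|\le f$ such that $G{-}F$ contains a $k$-path, some sampled subgraph $G_i$ is disjoint from $F$ and still contains a $k$-path. Whenever this event holds, any witness $P_i$ extracted by Williams' algorithm from such a $G_i$ trivially avoids $F$ and is therefore a valid $k$-path in $G{-}F$. To analyze the event, fix $F$ and any $k$-path $P_F$ in $G{-}F$ (which has exactly $k{-}1$ edges); the probability that a single $G_i$ both keeps all edges of $P_F$ and drops all edges of $F$ equals $p^{k-1}(1-p)^f$. Routine calculus shows that this expression is maximized at $p=(k-1)/(k-1+f)$, where it evaluates to $\Theta\bigl((k/(k+f))^k (f/(k+f))^f\bigr)$, and the cleaner choice $p = k/(k+f)$ realises the same rate up to constants.

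Since the number of possible fault sets is $\binom{m}{f}\le m^{f} = n^{O(f)}$, a union bound over $F$ succeeds as soon as $N \cdot p^{k-1}(1-p)^f = \Omega(f \log n)$, so the choice
\[
N \,=\, \Theta\!\left(\left(\frac{f+k}{f}\right)^{f} \left(\frac{f+k}{k}\right)^{k} f \log n\right)
\]
makes the oracle correct with high probability. The total space for the $N$ witness paths is $O(Nk)$, the preprocessing time is $N \cdot O(2^k \poly(n))$, and the query time, obtained by hashing whichever of $F$ or $E(P_i)$ is smaller and probing the larger set, is $O(N \min\{f,k\})$, all matching the claim.

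The main obstacle will be keeping the union bound tight: it is precisely the $n^{O(f)}$ count of fault sets that forces the extra factor of $f$ in $N$, and any looser estimate here would simultaneously inflate the space, preprocessing, and query bounds. A secondary technicality is justifying that the slightly suboptimal $p=k/(k+f)$ (as opposed to the exact maximiser $(k-1)/(k-1+f)$) only costs constant factors, which reduces to bounding from above the ratio of $p^{k-1}(1-p)^f$ evaluated at those two nearby values.
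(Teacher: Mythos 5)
Your proposal is essentially identical to the paper's proof (Lemma~\ref{lem:randkpath} and its corollaries): the same Weimann–Yuster-style scheme of sampling $\Theta\bigl((\tfrac{f+k}{f})^f(\tfrac{f+k}{k})^k f\log n\bigr)$ random edge-subgraphs with keep-probability $\approx k/(k+f)$, storing one witness $k$-path per subgraph, the same $p^{f}(1-p)^{k}$-type lower bound on the witness probability combined with a union bound over the $\le n^{O(f)}$ fault sets, and the same dictionary trick (hash the smaller of $F$ and $E(P_i)$) to get the $\min\{f,k\}$ in the query time. The only cosmetic differences are that the paper phrases $p$ as a deletion probability, uses the slightly looser $(1-p)^k$ in place of your $(1-p)^{k-1}$, and runs the concentration step through a Chernoff bound rather than the elementary $(1-q)^N\le e^{-Nq}$ argument you invoke.
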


\noindent
Note that the expression $\big(\frac{f+k}{f} \big)^f \big(\frac{f+k}{k} \big)^k$ is maximum for $f = k$, where it defaults to $2^{f+k}$.

As it often does, \textsc{$k$-Path} serves as a good test case for a number of related problems. The techniques used for our results in fact can be applied to all problems that search for subgraph characterized by $k$ edges and are in FPT parameterized by $k$. For such a problem $\Pi$, these techniques give an oracle for $\Pi$ under edge failures with size and query time as given in  \autoref{thm:intro_kpath_tree} (or  \autoref{thm:intro_kpath_subgraphs}), where the construction time depends on the best deterministic (or randomized) algorithm to solve $\Pi$. Such an approach can be used for example for \textsc{Graph Motif} (see e.g.~\cite{Pinter2014DeterministicPA}) or \textsc{$k$-Tree} (see e.g.~\cite{Fomin2014EfficientCO}). 
There are also problems that can be reduced to $k$-Path but do not fall into the above category. In the fault tolerant setting, it is not so easy to transfer these reductions in a meaningful way. As a more involved example we consider the  \textsc{Exact Detour} problem (see~\cite{BezakovaCDF19Journal}) where given a graph $G=(V,E)$, two vertices $s, t\in V$ and a parameter $k\in \mathbb N$, the task is to decide if there exists a path of length $d(s,t,G)+k$ in $G$. Our results can be used to derive from \autoref{thm:intro_kpath_tree}  an oracle for a fault tolerant version of \textsc{Exact Detour} that can be built in time $O(n^2\cdot k^{f+1} \cdot 2.2554^k)$ has size $O(n^2\cdot k^{f+2})$ and query time $O(n^2\cdot (k f^2+k^2))$.

The second FPT problem we consider is the famous \textsc{Vertex Cover}, that is, the decision problem in which, given an undirected (resp., directed) input graph $G$ and a positive integer parameter $k$, we want to decide whether $G$ contains a set $C$ of at most $k$ vertices (a.k.a.\ a $k$-vertex cover) such that, for every edge $(u,v)$ of $G$, $C \cap \{u,v\}\neq \emptyset$ (resp, $u \in C$). 
This problem is sometimes referred to as the ``drosophila of FPT'' as virtually every algorithmic technique
in fixed-parameter tractability was demonstrated first using \textsc{Vertex Cover}, 
see the textbooks~\cite{Cygan15ParamertizedAlgorithms,DowneyFellows13Parameterized,Niedermeier06Invitation} for countless examples.
Our work continues this tradition as, in Section \ref{sec:vertex_cover}, we design three $f$-edge fault-tolerant SOs for the \textsc{Vertex Cover} problem with parameter $k$ (\ftkvc).
 The trade-offs we give in our constructions are even broader than in the case of the \textsc{$k$-Path} problem.
We present solutions whose respective sizes and query times range from polynomial to exponential in the parameters $f$ and $k$.
However, in all cases both the space requirement and query time is independent of the graph size.

\begin{restatable}{theorem}{thmintroVCSOthreetof}
\label{thm:intro_VCSO_3_to_f}
	There is a deterministic Vertex Cover Sensitivity Oracle of size $O(3^{f+k})$
	and $O(2^f)$ query time then can be built in time $O(3^{f+k} \, \poly(f,k)+ \poly(n))$.
\end{restatable}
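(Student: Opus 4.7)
The plan is to combine a Buss-style kernelization with a ternary bounded search tree that branches simultaneously on \emph{which endpoint covers a given edge} and \emph{whether that edge is considered failed}.

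First I would kernelize $G$ with threshold $k+f$: iteratively place every vertex $v$ whose current degree exceeds $k+f$ into a mandatory set $C_0$, delete it, and finally discard isolated vertices. An inductive argument shows that such a $v$ retains degree greater than $k$ in $G-F$ for any $F$ with $|F|\le f$, hence $v$ must lie in every $k$-vertex cover of $G-F$, so $C_0\subseteq C^{\!*}$ for every such cover $C^{\!*}$. If $|C_0|>k$, or if the kernel $G_0$ still has more than $(k+f)(k-|C_0|)+f$ edges, the oracle hard-codes NO; otherwise $G_0$ has $O((k+f)^2)$ vertices and edges, and I would set $k':=k-|C_0|$.

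Next, on $G_0$ I would construct a ternary search tree whose nodes carry a partial state $(C,U)$ with $C\subseteq V(G_0)$ and $U\subseteq E(G_0)$, starting at $(\emptyset,\emptyset)$. At each internal node I pick an edge $e=\{u,v\}$ neither covered by $C$ nor already in $U$ and create three children $(C\cup\{u\},U)$, $(C\cup\{v\},U)$, and $(C,U\cup\{e\})$, pruning whenever $|C|>k'$ or $|U|>f$. Since every step increments $|C|+|U|$ by one, the tree has depth at most $k+f$ and at most $3^{k+f}$ leaves; at each leaf every edge of $G_0$ is either covered by $C$ or lies in $U$. Finally I would collect $\mathcal{U}:=\{U:U\text{ is a leaf's committed-uncovered set}\}$ and store it in a dictionary keyed by a canonical representation of each edge set, occupying $O(3^{k+f})$ words.

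To answer a query $F$, I would restrict to $F_0:=F\cap E(G_0)$ (edges of $F$ incident to $C_0$ are already covered by $C_0$ and thus irrelevant), enumerate the $2^{|F_0|}\le 2^f$ subsets $S\subseteq F_0$ via a Gray-code, and return YES iff $S\in\mathcal{U}$; maintaining an incrementally-updatable hash during the enumeration gives $O(2^f)$ total query time. Correctness is bidirectional: a leaf $(C,U)$ with $U\subseteq F$ yields the $k$-vertex cover $C_0\cup C$ of $G-F$; conversely, given any $k$-vertex cover $C^{\!*}$ of $G-F$, the policy ``at edge $e$ take the branch corresponding to an endpoint of $e$ in $C^{\!*}\setminus C_0$ if one exists, else the third branch'' navigates the search tree to a leaf whose $U$ equals $\mathrm{uncov}_{G_0}(C^{\!*}\setminus C_0)\subseteq F_0$, which is then hit by the enumeration.

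The main obstacle, and the reason the Buss reduction is indispensable, is keeping every search step polynomial in $f$ and $k$ only. Without it each step could scan $G$, blowing up the $3^{k+f}$ factor by $n$; after kernelization each step costs $\poly(f,k)$ and the whole tree assembles in $O(3^{k+f}\poly(f,k))$ time on top of the $\poly(n)$ kernel work.
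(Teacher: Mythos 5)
Your proposal follows essentially the same route as the paper: a Buss-style high-degree reduction with threshold $k+f$ to obtain a $\poly(k,f)$-size kernel, followed by a ternary branching tree with ``put $u$ in the cover,'' ``put $v$ in the cover,'' and ``mark $(u,v)$ as failed'' branches bounded by $|C|\le k$ and $|U|\le f$, collecting the failure sets at successful leaves into a dictionary and answering queries by enumerating the $2^f$ subsets of $F$. Your exposition is slightly more explicit about restricting the query to $F\cap E(G_0)$ and about the navigation policy used in the converse direction, but the decomposition, the witness structure, and the correctness argument coincide with the paper's.
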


\begin{restatable}{theorem}{thmintroVCSOlowquery}
\label{thm:intro_VCSO_low_query}
	There is a deterministic Vertex Cover Sensitivity Oracle of size $O(2^{f+k^2+k})$
	and $O(f+k^2)$ query time.
	The oracle can be built in time $O(2^{f+k^2+k}(1.2738^k +\poly(f,k))+\poly(n))$.
\end{restatable}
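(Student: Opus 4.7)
The plan is to tabulate the answer to the query over all ``signatures'' of failure sets relative to a bounded kernel of $G$. As a first step, I would run the Chen--Kanj--Xia algorithm on $G$ in time $O(1.2738^k + \poly(n))$ to test whether $G$ has a $k$-vertex cover. If yes, any such cover also covers $G-F$ for every $F$, and the oracle simply stores it and answers ``yes''. Otherwise, I would build a fault-tolerant kernel by setting $H := \{v \in V(G) : \deg_G(v) > k + f\}$. Every vertex of $H$ must lie in every $k$-vertex cover of $G - F$ whenever $|F|\leq f$, because up to $f$ incident deletions still leave its degree strictly above $k$. If $|H|>k$, then no $k$-vertex cover of $G-F$ can exist for any $F$, and the oracle stores a ``no''-flag. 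When $|H|\leq k$, further Buss reductions on $G-H$ extract a kernel $K$ with $O(k)$ vertices and at most $k^2$ relevant edges, namely the edges whose covering status inside a residual $(k-|H|)$-vertex cover of $G-H-F$ is not already determined by the forced portion.

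Next I would enumerate all failure patterns that are distinguishable by $K$ and $H$. A query $F$ with $|F|\leq f$ is described by three components: the subset $F\cap E(K)$ of failing kernel edges, giving $2^{k^2}$ possibilities; the subset $H_F\subseteq H$ of vertices that lose enough incident edges in $F$ to drop to degree $\leq k$ (and therefore stop being forced), giving at most $2^{|H|}\leq 2^k$ possibilities; and an $f$-bit signature over the failures in $F\setminus E(K)$ indicating, for each such edge, whether it changes the residual instance. For each of the $2^{f+k^2+k}$ triples I would assemble the corresponding residual graph, which has size $\poly(f,k)$, and run Chen--Kanj--Xia on it in time $O(1.2738^k+\poly(f,k))$, storing the Boolean answer in a table indexed by the triple. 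This yields build time $O\bigl(2^{f+k^2+k}(1.2738^k+\poly(f,k))+\poly(n)\bigr)$ and total space $O(2^{f+k^2+k})$.

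Given a query $F$, the oracle computes $F\cap E(K)$ by hashing the $f$ failing edges against $E(K)$ in $O(f)$ time, determines $H_F$ by counting incident failures for each of the $O(k)$ vertices of $H$ in total $O(f+k)$ time, and produces the $f$-bit signature in $O(f)$ time. Concatenating the three into an address of length $f+k^2+k$ bits, the Boolean answer is read from the table in $O(f+k^2)$ word-RAM operations, matching the claimed query time.

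The hard part will be to show that the triple $(F\cap E(K),\, H_F,\, \text{signature})$ is a complete invariant of the $k$-vertex cover question on $G-F$, so that two failure sets with identical triples must give the same answer. This requires a structural argument that non-kernel edges affect the problem only through the signature: edges incident to a vertex in $H\setminus H_F$ are covered by that still-forced vertex and hence irrelevant to feasibility, while edges fully outside $V(K)\cup H$ have already been absorbed by the kernelization of $G-H$ and do not appear in the residual instance. Once this equivalence of signatures is established, the table construction and the query analysis described above combine to give the stated size, query time, and preprocessing time.
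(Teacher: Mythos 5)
The approach you sketch is genuinely different from the paper's (which builds a binary fault-tolerant lookup tree via branching with a measure-and-conquer budget of $f$ on ``failed'' edges and $k(k+1)$ on ``safe'' edges, so that the path followed in the tree \emph{is} the signature), but your version has a gap that you yourself flag and that I think is fatal as written.

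First, your set $H_F$ is always empty. You define $H = \{v : \deg_G(v) > k+f\}$, so every $v \in H$ still has degree $> k$ in $G-F$ for any $|F|\le f$; no vertex in $H$ can ``drop to degree $\le k$.'' The $2^k$ factor in your address therefore carries no information, and the role you intended for it (tracking which forced vertices become unforced) never materializes.

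Second, and more importantly, the $f$-bit signature is not a well-defined table index. You describe it as recording, for each failing edge outside $E(K)$, ``whether it changes the residual instance,'' but this is circular: determining whether an edge changes the instance requires solving the instance, which is what the table is supposed to answer. Moreover the $f$ bits are indexed by positions in the arbitrary query $F$, not by a fixed set of edges chosen at preprocessing time, so two failure sets with the same ``signature'' need not agree, and there is no canonical residual graph to run Chen--Kanj--Xia on when filling the table. Finally, the paper explicitly observes (via the example of $k+1$ stars with $f$ leaves each) that a fault-tolerant kernel must retain $\Omega(fk)$ edges; further Buss reduction of $G-H$ down to $O(k)$ vertices and $O(k^2)$ edges, as you propose, discards edges that are essential for some failure sets. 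The paper's lookup-tree construction sidesteps all of these issues: the branching decisions are made on concrete edges during preprocessing, the tree depth bound $f + k(k+1)$ comes from explicit budgets, and the query simply replays the decisions against $F$ in $O(f+k^2)$ time. To salvage your tabulation idea you would need to replace the ad-hoc signature by something canonical with a proven equivalence, which is essentially what the branching tree provides.
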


A slight adaption of the dynamic graph structure by Alman, Mnich and Vassilevska Williams~\cite{dynamicfpt} directly gives the following.

\begin{restatable}{corollary}{corintroVCSOlowspace}
\label{cor:intro_VCSO_low_space}
	There is a deterministic Vertex Cover Sensitivity Oracle of size $O(fk+k^2)$
	and $O(1.2738^k + f)$ query time, that can be built in polynomial time. 
\end{restatable}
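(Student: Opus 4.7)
The plan is to adapt the Buss-style kernelization for \textsc{Vertex Cover} to the sensitivity setting by raising the degree threshold from $k$ to $k+f$. The key observation is that if a vertex $v$ of $G$ satisfies $\deg_G(v) > k+f$, then $\deg_{G-F}(v) > k$ for every edge-failure set $F$ with $|F| \leq f$, and so $v$ must belong to any $k$-vertex cover of $G-F$.

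During preprocessing I would compute the set $H = \{v \in V(G) : \deg_G(v) > k+f\}$. If $|H| > k$, no $k$-vertex cover of $G-F$ can exist regardless of $F$, so the oracle hard-codes the answer \textsc{No}. Otherwise, every valid cover contains $H$, and it remains only to cover the edges among $R := V \setminus H$ not present in $F$. Let $E_R$ denote the set of edges with both endpoints in $R$. Each vertex of $R$ has degree at most $k+f$ in $G-F$; hence if a vertex cover of size $k' := k - |H|$ exists in $(R, E_R \setminus F)$, then $|E_R \setminus F| \leq k'(k+f)$, and therefore $|E_R| \leq k'(k+f) + f$. Whenever this bound is violated I reject outright at preprocessing; otherwise the oracle stores $H$ together with $E_R$, yielding $O(fk + k^2)$ total space.

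To answer a query $F$, I first compute $F' = F \cap E_R$ in $O(f)$ time by hashing the stored edges, and then invoke the Chen-Kanj-Xia algorithm on the kernel $(R, E_R \setminus F')$ to decide whether a vertex cover of size $k'$ exists. Since the kernel has only $O(fk+k^2)$ elements, this runs in $O(1.2738^{k'}+\poly(k,f))$ time, which collapses to the claimed $O(1.2738^k + f)$ once the polynomial factors are absorbed into the exponential term, exactly as in the dynamic setting of Alman, Mnich, and Vassilevska Williams. Preprocessing itself only requires computing vertex degrees and assembling $H$ and $E_R$, which is routine in $\poly(n)$ time. The main subtlety, and the reason the adaption is only slight, is to verify that the polynomial overhead of Chen-Kanj-Xia is indeed absorbed into the stated complexity and that the additive slack of $+f$ in the kernel-size argument suffices to accommodate an arbitrary failure set $F$ with $|F| \leq f$.
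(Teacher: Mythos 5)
Your kernelization step (raising the Buss threshold to $k+f$, storing the forced set $H$ and the low-degree edge set $E_R$) matches \cref{lem::vckernel} of the paper and gives the correct $O(fk+k^2)$ space bound. The query-time analysis, however, has a genuine gap. After computing $F' = F \cap E_R$, you invoke Chen--Kanj--Xia on the graph $(R, E_R \setminus F')$, which has $\Theta(fk+k^2)$ edges. Merely reading this input already costs $\Omega(fk)$, and the $kn$ term in the $O(1.2738^k + kn)$ running time of Chen--Kanj--Xia then gives $O(1.2738^k + fk^2)$; even if you insert an extra Buss pass (threshold $k$) before branching, you still pay $\Omega(fk)$ to scan $E_R$ and land at $O(1.2738^k + fk)$. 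This does not ``collapse'' to $O(1.2738^k + f)$: the term $fk$ is not in $O(1.2738^k + f)$, as can be seen by taking $f$ superpolynomial in $k$ (say $f = 2^{k^2}$), where $fk$ dominates both $1.2738^k$ and $f$. So the hand-wave that the polynomial overhead is absorbed into the exponential term is exactly the step that fails.

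The paper avoids this by adapting the dynamic vertex-cover data structure of Alman, Mnich, and Vassilevska Williams, which maintains a kernel of size $O(k^2)$ (not $O(fk+k^2)$) at all times together with per-vertex reserve lists $R_v$. In the sensitivity setting these lists can be truncated to $f$ entries, giving the $O(fk+k^2)$ overall space, and the expensive rebuild step that fires when a vertex's degree drops to exactly $k$ can be skipped, bringing the worst-case per-deletion update to $O(1)$. The query then performs $f$ constant-time deletions followed by Chen--Kanj--Xia on an $O(k^2)$-size kernel, for a total of $O(f + 1.2738^k + k^3) = O(1.2738^k + f)$. To repair your proof you would either need to reproduce this two-level bookkeeping explicitly, or accept the weaker bound $O(1.2738^k + fk)$ that your direct approach actually delivers.
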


We conclude Section \ref{sec:vertex_cover} by providing an $f$-edge/vertex fault-tolerant DSO parameterized by the size of a vertex cover of $G$.\footnote{An $f$-edge/vertex fault-tolerant DSO is a data structure that reports the value $d(s,t,G-F)$ when queried on two vertices $s$ and $t$, and a set $F$ of edges/vertices of $G$ of size at most $f$.} More precisely, given a vertex cover $C$ for $G$ with $|C|=k$, we can compute in polynomial time an $f$-DSO of size  $O(n\cdot \poly(k,f))$ with query time in $O(\poly(k,f))$.

So far we applied ideas from fault tolerance to FPT problems,
we conclude our paper with new results in the opposite direction.
We introduce a parameterized variant of the classical fault-tolerant problem of
computing distance preservers.
An $f$-\emph{edge fault-tolerant distance preserver} ($f$-EFT-BFS)\footnote{%
	The abbreviation $f$-EFT-BFS follows a naming convention introduced in~\cite{ParterP13},
	the conference version of \cite{ParterP16},
	and alludes to the origin of this line of research in breadth-first searches.
}
of a directed unweighted graph $G$ with a distinguished source vertex $s$
is a subgraph $H$ of $G$ such that for any vertex $v$ and set $F$ of at most $f$ edges,
we have $d(s,v,G\m F)=d(s,v,H\m F)$. 
Parter~\cite{Parter15} as well as Parter and Peleg~\cite{ParterP16} proved and $\Omega(n^{2-\frac{1}{f+1}})$ space lower bound for $f$-EFT-BFS.
They also constructed a $1$-EFT-BFS with $O(n^{3/2})$ edges.
Bodwin et al.~\cite{BGPW17} generalized this to an $\Otilde(f \nwspace n^{2-\frac{1}{2^f}})$
bound\footnote{%
	The $\Otilde(\cdot)$ notation hides polylogarithmic factors in $n$.
} 
for general $f$.

We investigate whether parameterization can help to sidestep the near-quadratic
lower bound on the size. 
This barrier stems from the requirement to preserve all \emph{replacement distances} $d(s,v,G\m F)$
from $s$ to any $v$ for arbitrary sets $F$ of at most $f$ edges.
Observe that in a directed graph, even if the graph $G\m F$ is still strongly connected,
the failure of $F$ may increase the replacement distance by an additive term of $n$ compared to 
the original distance $d(s,v,G)$.
We parameterize the problem by this increase.
Now the subgraph only has to preserve those distances that get stretched by at most $k$.
More precisely, we define an $(f,k)$-EFT-BFS for a given graph $G$
to be a subgraph $H$  such that, for every vertex $v \in V$ and every subset $F\subseteq E$ of size $|F| \le f$
satisfying $d(s,v,G\m F)\leq d(s,v,G)+k$, 
we have $d(s,v, H\m F) = d(s,v, G\m F)$.

In Section \ref{sec:FT-preservers} we prove the following results. 
The first one is a more refined bound on the size of $(f,k)$-\normalfont{EFT-BFS} parameterized by $k$.

\begin{theorem}
\label{thm:intro_FT-BFS}
For any parameters $f,k\geq 1$ and directed $n$-vertex, $m$-edge graph $G$, 
we can compute in $2^{O(fk)}mn$ time an $(f,k)$-\normalfont{EFT-BFS} with $2^{O(fk)}n$ edges.
\end{theorem}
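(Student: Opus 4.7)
The plan is to construct $H$ as the union of a carefully chosen collection of replacement paths, starting from a BFS tree $T$ rooted at $s$ in $G$ and enriching it with ``detour'' paths. For every relevant query, i.e., a pair $(v,F)$ with $|F|\leq f$ and $d(s,v,G\m F)\leq d(s,v,G)+k$, the goal is to ensure that some shortest $s$\emDash$v$ path in $G\m F$ lies in $H\m F$.

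The central structural observation I would establish first is that every relevant replacement path $P$ of length $d(s,v,G)+k'$, with $k'\leq k$, admits a canonical decomposition into at most $f+1$ ``advancing'' segments (each edge increasing the BFS-level of $G$ by one) separated by at most $f$ ``detour'' segments whose accumulated extra length equals $k'$. The reason for the bound $f$ on detours is that, if all failed edges are removed from consideration, then the BFS-level-advancing pieces of $P$ would merge into a single shortest path, so each detour can be charged to the failure on the broken BFS prefix it has to route around. This decomposition is therefore describable, up to the identity of the specific edges used, by $O(fk)$ bits encoding the number of detours, their relative positions and the amount of additive stretch each of them contributes.

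Given this encoding, I would enumerate over the $2^{O(fk)}$ possible decomposition types $\tau$. For each fixed $\tau$, I would build an auxiliary subgraph that contains a canonical type-$\tau$ replacement path to every vertex reachable by such a path. Using a BFS/Dijkstra-like construction rooted at $s$, where at each step one extends along advancing edges or enters/leaves a detour according to $\tau$, this auxiliary subgraph can be arranged to be shortest-path-forest-like and hence of size $O(n)$. Taking the union over the $2^{O(fk)}$ types yields $|E(H)|\leq 2^{O(fk)}\nwspace n$, and the preprocessing time is dominated by the cost of running one BFS-like procedure per type, for a total of $2^{O(fk)}\nwspace mn$.

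The main obstacle is making the structural decomposition truly canonical so that (i) every relevant replacement path corresponds to at least one enumerated type, and (ii) for each type $\tau$, the auxiliary object really is a forest-sized shortest-path structure rather than a union of exponentially many branches. I would attack this by induction on $f$: removing one carefully chosen failed edge ``peels off'' one detour, reducing the problem to an $(f\m1,k\m k'')$-instance on a slightly modified graph. Combined with the BFS-amortization argument used in the unparameterized constructions of $f$-EFT-BFS by Parter--Peleg and Bodwin et al., this recursion should deliver the claimed $2^{O(fk)}\nwspace n$ edge bound and matching construction time.
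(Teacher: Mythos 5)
Your proposed decomposition rests on a structural claim that does not appear to be true and is in any case not what the paper proves. You assert that a replacement path $P$ with additive stretch at most $k$ under $|F|\leq f$ failures has at most $f$ ``detour'' (non-level-advancing) segments, and you justify this by charging each detour to a failed edge. This charging is not sound: by the standard replacement-path decomposition, $P$ splits into at most $f+1$ segments that are each shortest paths \emph{in $G$ between their own endpoints}, but only the first of these (which starts at $s$) is guaranteed to be level-monotone with respect to $\level(\cdot)=d(s,\cdot,G)$. A shortest $a$-$b$ path in $G$ with $a\neq s$ can alternate between level-advancing and level-flat edges many times, so a single failure can generate several maximal non-advancing runs. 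What \emph{is} bounded is the total additive stretch, i.e., the sum of the quantities $1+\level(a_i)-\level(b_i)$ over non-advancing edges, which is $d(s,v,G\m F)-d(s,v,G)\leq k$ (this is the paper's Lemma~\ref{lemma:X_property}); this gives at most $k$, not $f$, non-advancing edges. As a consequence your ``$O(fk)$ bits for number of detours, positions, and magnitudes'' encoding is not established: the magnitudes alone already need a bound of $k$ on their count, and encoding positions along the path in $O(fk)$ bits is not argued.

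The second, larger gap is the sentence ``this auxiliary subgraph can be arranged to be shortest-path-forest-like and hence of size $O(n)$.'' You flag this yourself as the main obstacle, and I don't see how your induction-on-$f$ sketch resolves it; the union of type-$\tau$ replacement paths over all targets $v$ and all failure sets $F$ is a priori a dense object, and it is precisely this sparsification that is the technical heart of the theorem. The paper handles it very differently: it does not encode positions at all. It enumerates only the sequence $Y$ of positive defect values (a sequence of positive integers summing to at most $k$, of which there are $2^k-1$), builds for each $Y$ a layered auxiliary graph $\G_Y$ on $(|Y|+1)n$ vertices in which a level transition by defect $y_i$ is realized by passing from layer $i-1$ to layer $i$ (Lemma~\ref{lemma:calH_Y}), and then invokes, as a black box, the Baswana--Choudhary--Roditty fault-tolerant \emph{reachability} preserver on $\G_Y$ against up to $f(k+1)$ edge failures. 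The $2^{O(fk)}$ factor is then the product of the $2^k$ sequences and the $2^{O(fk)}$ in-degree bound of the reachability preserver; the sparsity comes from that preserver, not from a per-type BFS forest. Without a replacement for that ingredient, your construction does not yield the claimed edge bound.
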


Note that for parameters $f$ and $k$ with $f \cdot k = o(\log n/\log \log n)$ we obtain a truly subquadratic number of edges.
Furthermore, the result is optimal for constant $f$ and $k$ since $\Omega(n)$
is an existential lower bound on the size of $({s}\times V)$-distance preserver even in the absence of any failures. Finally, our result improves significantly over the $\widetilde{O}(f \nwspace n^{2-\frac{1}{2^f}})$ bound in the unparameterized case by Bodwin et al.~\cite{BGPW17}.

We conclude Section \ref{sec:FT-preservers} by presenting an oracle that reports for a given pair $(v,F)$ whether the failure of $F$
stretches the $s$-$v$-distance by at most an additive term $k$.
The preprocessing time of our oracle is independent of the number of supported failures $f$,
and the query time is independent of the size of the underlying graph $G$
Let $\omega<2.37286$ denote the matrix multiplication exponent~\cite{AlmanVWilliams21RefinedLaserMethod}.
Our result reads as follows.

\begin{restatable}{theorem}{oraclestretcheddistances}
\label{thm:intro_oracle_stretched_distances}
For any parameters $f,k\geq 1$ and directed graph $G = (V,E)$ with designated source $s \in V$, there is a 
Monte Carlo oracle
of size $O(2^k k^2 \cdot n^2)$
that,
for any pair $(v,F)$ where $v\in V$ and $F\subseteq E$ has size at most $f$, reports w.h.p. whether $d(s,t,G\m F) \le d(s,t,G) + k$ in time $O(2^k (fk)^\omega)$.
The oracle is computable in time $O(2^k k^\omega \cdot n^\omega)$, independent of $f$.
\end{restatable}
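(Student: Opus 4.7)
The plan is to reduce the problem to $f$-sensitive reachability on a layered directed acyclic graph of $O(nk)$ vertices, and solve this reachability by combining $2^k$ precomputed transitive-closure-like matrices with a small matrix computation at query time.

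First, run a BFS from $s$ to obtain $\delta_v := d(s,v,G)$ for every vertex $v$. Call an edge $(u,w) \in E$ an \emph{advance} edge if $\delta_w = \delta_u + 1$, and a \emph{non-advance} edge otherwise. The key observation is that along any $s$-to-$v$ walk, each non-advance edge strictly increases the running \emph{offset} (current length minus distance from $s$), while advance edges preserve it. Hence any replacement walk witnessing $d(s,v,G\m F) \le \delta_v + k$ contains at most $k$ non-advance edges and, between consecutive ones, uses only advance edges, i.e., traces shortest-path segments of the advance-edge DAG $G^{\mathrm{adv}}$. Encode this structure in a layered digraph $H$ on $V \times \{0,\ldots,k\}$ in which each advance edge $(u,w)$ yields horizontal arcs $((u,j),(w,j))$ at every level $j$, whereas each non-advance edge $(u,w)$ with $\delta_u - \delta_w = \Delta \ge 0$ yields climbing arcs $((u,j),(w,j+\Delta+1))$ whenever $j+\Delta+1 \le k$. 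The query reduces to asking whether $(s,0)$ reaches some $(v,j)$ in $H - F^H$, where $F^H$ is the $O(fk)$-size lift of $F$.

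For the preprocessing, randomly colour the non-advance edges with $k$ colours: by a standard color-coding argument, the at most $k$ non-advance edges of any valid replacement walk receive pairwise distinct colours with constant probability, and $O(\log n)$ independent repetitions drive the error to $1/\poly(n)$. For every subset $S \subseteq \{1,\ldots,k\}$ (this is where the factor $2^k$ is born) build a matrix $M_S$ of dimension $O(nk) \times O(nk)$ recording reachability in $H$ along walks whose non-advance edges use exactly the colours of $S$. Each $M_S$ is obtained via a M\"obius-style iteration of $|S|$ Boolean matrix products interleaving the transitive closure of $G^{\mathrm{adv}}$ with the monochromatic non-advance-edge matrices; the total cost is $O(2^k k^\omega n^\omega)$ time and $O(2^k k^2 n^2)$ storage. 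At query time, form the critical set $C$ consisting of the endpoints of $F$ together with $s$ and $v$, lifted to all $k+1$ layers, so $|C| = O(fk)$. For each $S$, restrict $M_S$ to $C \times C$, patch the entries to discard the contributions of $F$, and perform inclusion-exclusion over the $2^k$ subsets while running fast matrix multiplication on the $O(fk)$-vertex skeleton; this yields the promised $O(2^k (fk)^\omega)$ query time.

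The main obstacle will be handling failures of \emph{advance} edges correctly. These never lie inside a non-advance colour class, yet they can destroy the shortest-path segments that glue consecutive non-advance edges together. The matrices $M_S$ therefore have to record more than the plain transitive closure of $G^{\mathrm{adv}}$ between pairs of potential critical vertices: they need a Menger-style certificate (either $f+1$ internally vertex-disjoint shortest-path segments, or else a separator of at most $f$ vertices) so that, from the $C \times C$ restrictions alone, the query can decide whether each shortest-path segment survives the removal of $F \cap E^{\mathrm{adv}}$. Fitting this extra certificate information within the advertised preprocessing and query budgets is the technically most delicate step of the proof.
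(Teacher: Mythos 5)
Your auxiliary-graph construction is on the right track: the paper uses essentially the same offset decomposition into advance and non-advance edges, encodes it in layered graphs (one graph $\mathcal{G}_Y$ per positive-integer sequence $Y$ with $\sum Y\le k$, rather than your single cumulative-offset graph $H$), and reduces the problem to deciding reachability under $O(fk)$ edge failures in each layered graph. Your single graph $H$ is even a somewhat cleaner formulation of the same idea.

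The genuine gap is exactly where you suspect it: the second half of your plan, in which you try to build the $f(k{+}1)$-fault-tolerant reachability machinery yourself via colour-coding, per-colour-subset matrices $M_S$, and inclusion-exclusion over $C\times C$ restrictions. Colouring only the non-advance edges gives you no handle on the case where $F$ contains advance edges: an advance-edge failure can cut a shortest-path segment in $G^{\mathrm{adv}}$ that your transitive-closure blocks treat as one monolithic arc, and the $C\times C$ restriction of $M_S$ simply does not contain enough information to tell whether that segment survives. Even for non-advance failures, ``patching the entries to discard the contributions of $F$'' is not a well-defined operation on a Boolean reachability matrix: knowing that $x$ reaches $y$ using colours $S$ does not tell you whether it still does after removing one specific edge of colour $c\in S$, since other edges of colour $c$ may provide an alternate route. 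The Menger-style certificates you gesture at would indeed be needed, but producing, storing, and composing such certificates across layers within the advertised $O(2^k k^\omega n^\omega)$ preprocessing and $O(2^k (fk)^\omega)$ query budgets is precisely the unproved content; your proposal acknowledges this but does not close it.

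The paper sidesteps all of this by invoking, for each layered graph $\mathcal{G}_Y$, the Monte Carlo fault-tolerant reachability oracle of Brand and Saranurak (FOCS 2019) as a black box: after $O(n^\omega)$ preprocessing and $O(n^2)$ space on an $n$-vertex digraph, it answers ``is $t$ reachable from $s$ in $G-F$?'' w.h.p.\ in $O(|F|^\omega)$ time, for arbitrary edge/vertex failure sets $F$. Applying it to each $\mathcal{G}_Y$ (with $O(kn)$ vertices) and lifting $F$ to $\phi^{-1}(F)$ with $|\phi^{-1}(F)|=O(fk)$ immediately gives the claimed $O(2^k k^\omega n^\omega)$ preprocessing, $O(2^k k^2 n^2)$ space, and $O(2^k (fk)^\omega)$ query time, with correctness following from your Lemma-type equivalence between replacement walks in $G-F$ and paths in the layered graph minus the lifted failure set. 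In short: keep your reduction, but replace the colour-coding/inclusion-exclusion engine with an existing fault-tolerant reachability oracle.
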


\section{Sensitivity Oracles for the $k$-Path Problem}
\label{sec:k-path}
In this section, we consider the \ftkpath problem
and design our solutions in the form of $k$-Path Sensitivity Oracles.
The first one aims for a query time that is polynomial in the parameter $k$
as well as the sensitivity $f$ and independent of the size of $G$,
the space needed to store the required information is rather high, roughly $k^{f+1}$.
The second sensitivity oracle reduces this space significantly, but the time needed to answer queries is now exponential in both $f$ and $k$, while still only logarithmic in the graph size.
Both data structures can be constructed in time proportional to their size and the time needed to detect a \kpath.

\subsection{Fault-Tolerant Lookup Trees}
\label{subsec:lookup_trees}

Our first solution is based on precomputing \kpaths in various subgraphs of $G$.
and arranging them in a structure that we call a \emph{fault tolerant-lookup tree} $FT^{f,k}(G)$.
The tree enables us to prepare for all possible failure sets 
since we can efficiently find a solution avoiding a given set $F$.
Let $T(k,n,m)$ be the time to compute a \kpath in an $n$-vertex, $m$-edge directed graph (or certify that no \kpath exits).
We tacitly assume $T(k,n,m) = \Omega(k \log k)$.

\begin{lemma}
\label{lem:tree}
	There is a $k$-Path Sensitivity Oracle of size $O(k^{f+1})$ 
    that can be computed in time $O(k^f \cdot T(k,n,m))$ and answers queries in time $O(f^2)$.
\end{lemma}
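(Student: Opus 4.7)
The plan is to build a rooted tree $FT^{f,k}(G)$ of depth at most $f$ whose nodes correspond to subgraphs of $G$ obtained by removing edges along the root-to-node path. Concretely, each node $u$ is associated with a set $F_u$ of at most $f$ edges (the root has $F_u = \emptyset$) and stores a precomputed $k$-path $P_u$ in $G - F_u$, or $\bot$ if no such path exists. Whenever $P_u \neq \bot$ and $u$ is not at depth $f$, create one child of $u$ for each of the at most $k{-}1$ edges $e$ of $P_u$; the corresponding child $u_e$ has $F_{u_e} = F_u \cup \{e\}$. In addition to the path itself, store the edges of $P_u$ in a hash table at $u$ supporting $O(1)$ membership queries.

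The construction invokes the underlying $k$-path algorithm once per node, in time $T(k,n,m)$. Since the branching factor is at most $k{-}1$ and the depth at most $f$, the tree contains $O(k^f)$ nodes, yielding construction time $O(k^f \cdot T(k,n,m))$ and total size $O(k^{f+1})$ once the $O(k)$ vertices of each stored path are accounted for.

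To answer a query with failure set $F$ of size at most $f$, traverse the tree starting at the root. At the current node $u$, if $P_u = \bot$ report that no $k$-path exists in $G - F$; otherwise iterate through the at most $f$ edges of $F$ and use $u$'s hash table to test, in $O(1)$ per test, whether any of them lies on $P_u$. If none does, return $P_u$, which by construction is a $k$-path in $G - F_u \supseteq G - F$. If such an edge $e \in F \cap P_u$ is found, descend to the child $u_e$. Each descent strictly enlarges the subset $F_u \subseteq F$, so the traversal terminates after at most $|F| \leq f$ steps, each of cost $O(f)$, for a total query time of $O(f^2)$.

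Correctness follows since $P_u$ is always a $k$-path in the supergraph $G - F_u$ of $G - F$, so returning a $P_u$ that avoids $F$ is valid; conversely, a $\bot$-label at any reached node certifies that $G - F_u$ (hence $G - F$) contains no $k$-path. The only point that requires a short argument is that the traversal cannot ``get stuck'' before resolving: if it ever reaches depth equal to $|F|$ then $F_u = F$, so the stored $P_u$ (if non-null) is guaranteed to avoid $F$ and is returned. The single subtlety in the design, and the only reason the query time is $O(f^2)$ rather than $O(fk)$, is to iterate over $F$ rather than over the edges of $P_u$ when searching for a witness of intersection; this is what the per-node hash table is for.
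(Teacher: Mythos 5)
Your proposal is correct and takes essentially the same approach as the paper: build a lookup tree of depth at most $f$ branching on the edges of stored $k$-paths, equip each node with an $O(1)$-membership structure over its path's edges, and at query time iterate over the $\le f$ failing edges (not the $\approx k$ path edges) to obtain $O(f^2)$ total query time. The only cosmetic difference is that the paper uses the deterministic static dictionary of Hagerup, Miltersen, and Pagh (relevant because the downstream theorem advertises a deterministic oracle), whereas you say ``hash table''; substituting that dictionary makes the two proofs identical.
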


\begin{proof}
Given a graph $G = (V,E)$ and parameters $f$ and $k$, we compute a fault tolerant-lookup tree $FT^{f,k}(G)$.
\autoref{fig:kpath_tree} provides an illustrating example.

\begin{figure}
    \includegraphics[height=.2\textwidth]{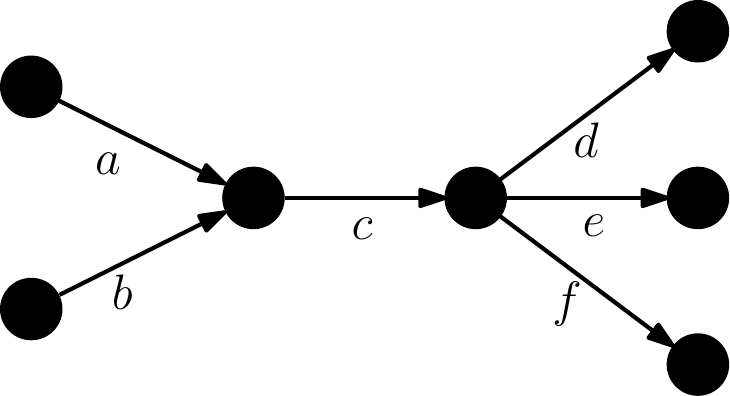}
    \hfill
    \begin{tikzpicture}[
    every node/.style = {shape=rectangle, align=center}, 
    level 1/.style={sibling distance={.15\textwidth}},
    level 2/.style={sibling distance={.1\textwidth}}]
    \node {\(N_\emptyset\)\\\(\{a,c,d\}\)}
        child { node {\(N_{(a)}\)\\\(\{b,c,d\}\)}
            child { node {\(N_{(a,b)}\)\\\(\emptyset\)}}
            child { node {\(N_{(a,c)}\)\\\(\emptyset\)}}
            child { node {\(N_{(a,d)}\)\\\(\{b,c,e\}\)}}}
        child { node {\(N_{(c)}\)\\\(\emptyset\)}}
        child { node {\(N_{(d)}\)\\\(\{a,c,e\}\)}
            child { node {\(N_{(d,a)}\)\\\(\{b,c,e\}\)}}
            child { node {\(N_{(d,c)}\)\\\(\emptyset\)}}
            child { node {\(N_{(d,e)}\)\\\(\{a,c,f\}\)}}};
  \end{tikzpicture}
    \caption{An example for an input graph $G$ (left) and a corresponding fault-tolerant lookup tree $FT^{f,k}(G)$ for $f=2$ and $k=3$ (right). Each node $N_{\vec S}$ is depicted with its stored \kpath $P(N_{\vec S})$.}
    \label{fig:kpath_tree}
\end{figure}

Let $E^{\le f}$ be the set of all vectors of length at most $f$ with entries from $E$.
For some $\vec{S} \in E^{\le f}$ and $e \in E$, $\vec S \diamond e$ is the vector obtained by appending $e$ to $\vec S$.
Finally, we let $S \subseteq E$ denote the set of entries of $\vec S$.
Each node in the tree $FT^{f,k}(G)$ is labeled as $N_{\vec{S}}$ for some $\vec{S}\in E^{\le f}$. 
Intuitively, the vector $\vec S$ corresponds to a path in $FT^{f,k}(G)$ starting at the root,
and the set $S$ corresponds to the failures considered in $N_{\vec{S}}$.
Each node stores a \kpath as a set of edges, denoted by $P(N_{\vec{S}})$.
We ensure that $P(N_{\vec S})$ and $S$ are always disjoint. 
We use $P(N_{\vec{S}})=\emptyset$ to mark the fact that there is no \kpath in the graph $G\m{}S$.

We set $N_\emptyset$ to be the root and $P(N_\emptyset)$ to any \kpath in $G$,
or $P(N_\emptyset) = \emptyset$ if no such path exists.
We iteratively augment this to a tree of depth up to $f$.
As long as there is a node $N_{\vec{S}}$ with $P(N_{\vec{S}}) \neq \emptyset$ and $|S|<f$,
we create a child node  $N_{\vec{S}\diamond e}$ of $N_{\vec{S}}$  for every $e\in P(N_{\vec{S}})$. We then decide whether there is a \kpath in $G-(S\cup\{e\})$. 
If so, we set it as the \kpath of $N_{\vec{S}\diamond e}$. If no such \kpath exists, we set $P(N_{\vec{S}\diamond e})=\emptyset$.
Note that indeed we have $P(N_{\vec{S}}) \cap S = \emptyset$ for every node $N_{\vec{S}}$. This implies that for every child $N_{\vec{S'}}$ of $N_{\vec{S}}$ we have $|S'|=|S|+1$ and hence after $\ell$ steps from the root $N_{\vec{S}}$ we have $|S|=\ell$.

We query the data structure with a set $F \subseteq E$ of failing edges.
We start at the root $N_\emptyset$ and traverse the tree as follows.
Let $N_{\vec{S}}$ be the current node. 
First assume $P(N_{\vec{S}})\neq \emptyset$. 
If $P(N_{\vec{S}}) \cap F = \emptyset$, we report that there is a \kpath in $G\m{}F$
since $P(N_{\vec{S}})$ is a witness of this fact. 
Otherwise, $P(N_{\vec{S}})\cap F \neq \emptyset$
and we move to $N_{\vec{S}\diamond e}$ for an arbitrary $e\in P(N_{\vec{S}})\cap F$.
By using only such steps, 
we guarantee that for every visited node $N_{\vec{S}}$ we have $S\subseteq F$. 
Hence, if $P(N_{\vec{S}}) = \emptyset$, there is no \kpath in $G-S$
and thus none in $G-F$ and we make the according output. 
Note that if $G-S$ contains a \kpath and $S' \subseteq S$ is a subset,
then also $G-S'$ contains a \kpath.
Therefore, it indeed does not matter which edge $e$ to follow during the query.
If after $|F|$ steps, we are at node $N_{\vec{S}}$, then $|S|=|F|$ whence $S=F$.
By construction, we have $P(N_{\vec{F}})=\emptyset$ or $P(N_{\vec{F}}) \cap F  = \emptyset$, so every query visits at most $|F| \le f$ nodes.  

The tree is of depth at most $f$ and every node has at most $k$ children, so the total number of nodes is bounded by $O(k^f)$.
To compute $P(N_{\vec{S}})\cap F$ efficiently, we store the \kpath in any node $N_{\vec{S}}$ in a static dictionary as proposed by Hagerup, Miltersen, and Pagh~\cite{Hagerup_Miltersen_Pagh_2001}. 
Their construction allows answering queries of the form $e\in P(N_{\vec{S}})$ in constant time, takes space in $O(k)$ and is computed in time $O(k \log k)$.\footnote{%
	The weak non-uniformity mentioned in \cite{Hagerup_Miltersen_Pagh_2001}, 
	that is, the need of compile-time constants depending on the word size, 
	only holds if this size is $\omega(\log n)$,
	which is not the case for us.
} 
Hence, the data structure requires space in $O(k^{f+1})$.  As for each of the up to $O(k^f)$ nodes we may have to find a \kpath and store the respective dictionary, the construction time is in $O(k^f\cdot (T(k,n,m)+k \log k)) = O(k^f\cdot T(k,n,m))$.
Finally, the query algorithm visit at most $f$ nodes and for every node $N_{\vec{S}}$ have to check whether there is $P(N_{\vec{S}}) \cap F \neq \emptyset$.
Using the respective dictionary, this can be done in time $O(|F|)=O(f)$.
The total query time is in $O(f^2)$.
\end{proof}

The current fastest deterministic algorithm for $k$-\textsc{Path} was given by Tsur~\cite{Tsur_2019}
and implies a preprocessing time of $O(k^f \cdot 2.554^k \cdot \poly(n))$ for our
oracle for \ftkpath.
Using the randomized algorithm by Williams~\cite{Williams_2009},
this reduces to $O(k^f \cdot 2^k \cdot \poly(n))$.
For more efficient algorithms on specific graph classes, we refer to the work by Uehara and Uno \cite{Uehara_Uno_2007}.

If $f>k$, we are able to improve both the query and preprocessing time. We use the same data structure as above but instead of building a dictionary for every \kpath we simply store the \kpaths. 
Conversely, at the beginning of each query, we build a dictionary for the failure set $F$ instead and then check every element of the visited \kpath in this dictionary, yielding the following result.

\begin{corollary}
   There is a $k$-Path Sensitivity Oracle of size $O(k^{f+1})$ that can be computed in time $O(k^f\cdot T(k,n,m))$ and answers queries in time $O(f \log(f) + fk)$.
\end{corollary}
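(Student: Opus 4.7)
The plan is to reuse the fault-tolerant lookup tree $FT^{f,k}(G)$ from the proof of \autoref{lem:tree}, changing only how the individual paths are stored and where the static dictionary is built. Concretely, each $P(N_{\vec S})$ is now stored as a plain list of its at most $k$ edges rather than as a Hagerup--Miltersen--Pagh dictionary. Since every node still holds $O(k)$ words, the total size remains $O(k^{f+1})$. The tree is built exactly as before and, because we no longer spend $O(k\log k)$ on a per-node dictionary, the construction time is still $O(k^f\cdot T(k,n,m))$.

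For a query with failure set $F$, the first step is to construct, once and for all, a static Hagerup--Miltersen--Pagh dictionary storing $F$. This costs $O(f\log f)$ time and supports membership queries $e\in F$ in constant time. We then traverse $FT^{f,k}(G)$ as in \autoref{lem:tree}: at each visited node $N_{\vec S}$ we iterate through the $O(k)$ edges of $P(N_{\vec S})$, test each for membership in $F$, and either certify a \kpath (if no edge of $P(N_{\vec S})$ lies in $F$), descend to a child $N_{\vec S\diamond e}$ for some failing edge $e\in P(N_{\vec S})\cap F$, or, if $P(N_{\vec S})=\emptyset$, report that no \kpath exists in $G{-}F$.

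Correctness is inherited verbatim from \autoref{lem:tree}; the only change is which of the two sets $P(N_{\vec S})$ and $F$ is stored in a dictionary. Swapping these roles is admissible because at every visited node we only need to decide whether $P(N_{\vec S})\cap F$ is empty and, if not, produce some witness edge, both of which are obtained by iterating through $P(N_{\vec S})$ and querying the $F$-dictionary. For the time bound, the traversal visits at most $f{+}1$ nodes and performs $O(k)$ constant-time lookups per node, contributing $O(fk)$, on top of the $O(f\log f)$ initial setup for $F$, giving the claimed query time of $O(f\log f+fk)$. I do not anticipate a substantial obstacle; the only subtlety is to build the dictionary for $F$ once at the beginning of the query rather than at each node, so that the $\log f$ factor is not multiplied by $f$.
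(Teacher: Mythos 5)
Your proposal is correct and matches the paper's approach exactly: store the $k$-paths as plain lists, build the Hagerup--Miltersen--Pagh dictionary for $F$ once at query time, and iterate over each visited node's $O(k)$ edges with constant-time membership tests against $F$. The bookkeeping (size $O(k^{f+1})$, construction $O(k^f\cdot T(k,n,m))$, query $O(f\log f + fk)$) and the observation that the $F$-dictionary must be built once up front, not per node, are all as in the paper.
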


\subsection{Subgraph Sampling}
\label{subsec:sampling}

Our second solution samples subgraphs of $G$ by randomly failing edges and computes \kpaths on the resulting instances. The query algorithm searches for a \kpath among the precomputed solutions 
that does not intersect the given set of failing edges.
If the existence of a fault-tolerant \kpath is reported, this is always correct.
However, due to the random preprocessing, the query algorithm may produce false negatives. 
We show that the algorithm succeeds on all possible queries with high probability.\footnote{
	We say an event occurs \emph{with high probability} (w.h.p.)
	if there is a constant $c>0$ such that the probability is at least $1-{1}/{n^c}$.
	In fact, in all our results the constant $c$ can be made arbitrarily large
	without affecting the asymptotic statements.
}
We define $r_{f,k} = \left(\frac{f+k}{f}\right)^{f} \left(\frac{f+k}{k}\right)^{k} \cdot 6\,f\log n$.

\begin{lemma}\label{lem:randkpath}
    There is a randomized $k$-Path Sensitivity Oracle with one-sided error 
    that takes space $O(r_{f,k}\,{\cdot}\,k)$, can be computed in time $O(r_{f,k}\,{\cdot}\,T(k,n,m))$,
    and in time $O(r_{f,k}\,{\cdot}\,f)$ answers queries correctly w.h.p.
\end{lemma}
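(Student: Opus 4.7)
The plan is to transfer the Weimann--Yuster subgraph sampling paradigm to the \ftkpath problem. At preprocessing time, independently draw $r = r_{f,k}$ random subgraphs $H_1, \ldots, H_r$ of $G$ by keeping each edge in $H_i$ with probability $p := k/(f+k)$. On each $H_i$, run the best available \kpath algorithm in time $T(k,n,m)$ and, whenever it succeeds, store the returned path $P_i$ (as a static dictionary on its edge set via \cite{Hagerup_Miltersen_Pagh_2001}, at $O(k)$ words and $O(k \log k)$ construction time); otherwise mark the slot as empty. To answer a query $F$, iterate through the stored dictionaries and, for each nonempty $P_i$, probe it with the at most $f$ edges of $F$ in $O(1)$ per lookup. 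Report $P_i$ as a \kpath of $G-F$ as soon as a path is found whose dictionary is disjoint from $F$; if the scan finishes without such a witness, declare that no \kpath exists in $G-F$.

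Correctness is one-sided by construction: every reported $P_i$ is a \kpath of $G$ that avoids $F$, hence a \kpath of $G-F$. For the converse, fix a failure set $F$ for which $G-F$ contains at least one \kpath, and single out an arbitrary such path $P^*$. Since $P^*$ and $F$ are edge-disjoint, the events ``$P^* \subseteq H_i$'' and ``$F \cap H_i = \emptyset$'' depend on disjoint sets of independent coin flips, so they co-occur in $H_i$ with probability
\[
p^k (1-p)^{|F|} \,\geq\, p^k (1-p)^f \,=\, \frac{k^k f^f}{(f+k)^{f+k}},
\]
and the right-hand side is optimized precisely by $p = k/(f+k)$. Whenever such a ``good'' $H_i$ occurs, $H_i \subseteq G-F$ contains $P^*$, so the subroutine returns some \kpath $P_i \subseteq H_i \subseteq G-F$, which the query then reports.

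Across all $r$ independent samples the chance that no $H_i$ is good for the fixed pair $(P^*,F)$ is at most $\exp(-r \cdot p^k (1-p)^f) = \exp(-6 f \log n) = n^{-6f}$ by the definition of $r_{f,k}$. A union bound over the at most $\binom{m}{\leq f} \leq (f+1)\,n^{2f}$ possible failure sets leaves a total failure probability polynomially small in $n$; the constant inside $r_{f,k}$ can be increased to push this below any desired $n^{-c}$. The resource bounds follow immediately: the oracle stores $r$ dictionaries of size $O(k)$, takes $r$ invocations of the \kpath algorithm to build (with the per-sample dictionary cost $O(k \log k)$ absorbed into $T(k,n,m)$), and answers queries in time $O(r \cdot f)$ via the per-path dictionary probes.

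The main obstacle is to jointly tune the sampling probability $p$ and the multiplicative overhead in $r_{f,k}$: $p$ must simultaneously make it likely that a surviving \kpath is retained (favoring large $p$) and that no failing edge slips into $H_i$ (favoring small $p$), and one needs both $|F| \leq f$ (so that $(1-p)^{|F|} \geq (1-p)^f$) and enough slack inside $r_{f,k}$ to dominate the $\approx n^{2f}$ candidate failure sets with a single $\log n$ factor per path. The choice $p = k/(f+k)$ together with the factor $6 f \log n$ is exactly what makes the union bound close while keeping the oracle polylogarithmic in $n$.
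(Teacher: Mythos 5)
Your proposal matches the paper's proof essentially line for line: the same subgraph-sampling scheme, the same dictionary structure from \cite{Hagerup_Miltersen_Pagh_2001}, the same witness argument bounding the success probability by $\frac{f^f k^k}{(f+k)^{f+k}}$, and the same union bound over $O(n^{2f})$ failure sets. The only cosmetic difference is that you parametrize by the keep-probability $k/(f+k)$ while the paper uses the remove-probability $f/(f+k)$, and you apply the elementary bound $(1-q)^r \le e^{-rq}$ where the paper invokes a Chernoff bound with $\delta = 1$; both give the needed $n^{-\Omega(f)}$ tail.
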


\begin{proof}
    Given a graph $G=(V,E)$ and parameters $f$ and $k$, the precomputing algorithm creates $r_{f,k}$ spanning subgraphs of $G$ independently at random. Every spanning subgraph $H$ is created by removing any edge $e \in E$ with probability $p={f}/{(f+k)}$, independently of all other choices. 
Let $\mathcal{H}$ be the set of these subgraphs $H$. We run the \kpath algorithm on every $H\in\mathcal{H}$ and, if it returns a \kpath $K$, we add it to the set of found \kpaths $\mathcal{K}$. This takes time in $O(r_{f,k}\cdot T(k,n,m))$.
    
    Given a query $F\subseteq E, |F|\le f$, we decide whether there is $K\in\mathcal{K}$ with $K\cap F=\emptyset$. If so, we output $K$. Otherwise, we output \emph{no}. In order to efficiently compute overlaps with $F$, we employ static dictionaries \cite{Hagerup_Miltersen_Pagh_2001}. Hence, the total construction time is in $O(r_{f,k}\cdot (T(k,n,m)))$. We only have to store the dictionaries of $\mathcal{K}$, so the size of the data structure is in $O(r_{f,k}\cdot k)$. Every query has to potentially check every $K\in\mathcal{K}$ for an overlap with $F$. Having prepared the dictionaries, this takes time in $O(r_{f,k}\cdot f)$.
    
    It remains to show that w.h.p.\ all queries are answered correctly and the error is one-sided. Any found \kpath $K$ in some subgraph $H\in \mathcal{H}$ not intersecting $F$ is also a \kpath in $G-F$. We obtain that the algorithm only outputs feasible \kpaths to the queries. However, the precomputed set of \kpaths $\mathcal{K}$ might not contain a \kpath, which does not intersect $F$, even though $G-F$ contains one. Therefore, the error is one-sided.
    
    We call a precomputed subgraph $H \in \mathcal{H}$ a \emph{witness} for a given set $F \subseteq E$, if the \kpath algorithm returns a \kpath K with $K \cap F = \emptyset$. Note that the query cannot produce a false negative for a given set $F$ if $\mathcal{H}$ contains a witness for $F$. We know that $H$ is a witness for $F$ if $H$ was sampled by removing at least the edges in $F$ and keeping at least one fixed \kpath.
    \[
        \prob{H \text{ witness}} \ge p^f \cdot (1-p)^k.
    \]
    We denote the latter by $q_{f,k}$, which, by standard analysis, is maximal for the chosen $p=f/(f+k)$. Note that $r_{f,k} = q_{f,k}^{-1}\cdot 6f\log n$. By a Chernoff bound~(see for example~\cite{Mitzenmacher_Upfal_2017}), we conclude that sampling $r_{f,k}$ graphs suffices to have w.h.p.\ at least one witness for every $F\subseteq E$ with $ |F|\le f$. For a given $F$, let $X$ be the set of witnesses for $F$ in $\mathcal{H}$. Since the subgraphs in $\mathcal{H}$ are sampled independently at random, we obtain the following with $\delta=1$.
    \[
       \prob{|X| = 0} = \prob{|X| \le (1 -\delta) r_{f,k} \ q_{f,k}} \le 
       	\exp\!\left(-\frac{\delta^2}{2} r_{f,k} \ q_{f,k} \right) = 
       	\exp\!\left(-\frac{1}{2}\cdot 6f\log n\right) = \frac{1}{n^{3f}}.
    \]
    By applying the union bound we get that the probability of not having a witness for any of the possible $m^f\le n^{2f}$ is upper bounded by $n^{-f}$.  
    As $f>0$, we get that the data structure answers all queries correctly w.h.p.
\end{proof}

Just like for \autoref{lem:tree}, we are able to improve the data structure for $f>k$ by creating a dictionary of $F$ instead of dictionaries for the \kpaths.
We can safely assume $f \le m = \poly(n)$, 
whence the extra $O(f \log f)$ time spend to build the dictionary
is negligible compared to the remaining query time of $O(r_{f,k}\cdot k) = O\big(\big(\frac{f+k}{f}\big)^{f} \big(\frac{f+k}{k}\big)^{k} \cdot fk \log n\big)$.

\begin{corollary}
    There is a randomized $k$-Path Sensitivity Oracle with one-sided error 
    that takes space $O(r_{f,k}\cdot k)$, can be computed in time $O(r_{f,k} \cdot T(k,n,m))$,
    and in time $O(r_{f,k}\cdot k)$ answers queries correctly w.h.p.
\end{corollary}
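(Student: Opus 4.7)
The plan is to re-run the proof of \autoref{lem:randkpath} almost verbatim, swapping the role of the dictionary. In the original argument we built a static dictionary per precomputed $k$-path so that each of the $r_{f,k}$ candidate paths can be checked against every edge of $F$ in $O(f)$ time, giving query cost $O(r_{f,k} \cdot f)$. When $f > k$ this is wasteful: it is cheaper to enumerate the $k$ edges of each stored path and ask membership queries in a dictionary built on $F$ instead.

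Concretely, first I would repeat the preprocessing of \autoref{lem:randkpath}: sample $r_{f,k}$ spanning subgraphs of $G$, each obtained by independently keeping every edge with probability $k/(f{+}k)$, and invoke the $k$-path algorithm on every sample to form the collection $\mathcal{K}$ of witness paths. This step is identical to before, so the preprocessing time remains $O(r_{f,k}\cdot T(k,n,m))$. However, now I would \emph{not} construct a static dictionary per $K \in \mathcal{K}$; instead I store each $K$ as a plain list of its $k$ edges. Since $|\mathcal{K}| \le r_{f,k}$, the resulting space is $O(r_{f,k}\cdot k)$, matching the bound claimed in the corollary.

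For a query with failure set $F$ of size at most $f$, I would first build a static dictionary $D_F$ for $F$ using the construction of Hagerup, Miltersen, and Pagh~\cite{Hagerup_Miltersen_Pagh_2001}, which takes $O(f\log f)$ time and supports constant-time membership queries. Then, for every $K \in \mathcal{K}$, I iterate through its $k$ edges and test each one against $D_F$; as soon as some $K$ is found to be disjoint from $F$, I return it, otherwise I report that no $k$-path exists in $G-F$. Each path is processed in $O(k)$ time, so over all of $\mathcal{K}$ the total work is $O(r_{f,k}\cdot k)$, giving overall query time $O(f\log f + r_{f,k}\cdot k)$.

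The only thing to verify is that the dictionary-construction overhead is absorbed. Since $f\le m = \poly(n)$, we have $f\log f = O(f\log n)$, and by definition $r_{f,k}\cdot k \geq 6 fk \log n \geq 6 f \log n$, so $f\log f = O(r_{f,k}\cdot k)$ and the query time collapses to $O(r_{f,k}\cdot k)$, as required. Correctness and the one-sided error bound follow unchanged from the proof of \autoref{lem:randkpath}: any $K$ returned is a genuine $k$-path in $G-F$, while a false negative on some fixed $F$ occurs only if none of the $r_{f,k}$ samples is a witness for $F$, an event whose probability is driven to $n^{-\Omega(f)}$ by the same Chernoff-plus-union-bound argument. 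No step poses a real obstacle; the only subtlety is checking that swapping the dictionary direction does not inflate the space, which it does not because each path occupies only $O(k)$ machine words without its auxiliary dictionary.
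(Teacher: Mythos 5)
Your proposal matches the paper's argument exactly: it rebuilds the sampling and witness collection of Lemma~\ref{lem:randkpath}, stores each found $k$-path as a plain edge list (giving the $O(r_{f,k}\cdot k)$ space), and at query time builds a static dictionary on $F$ instead, iterating over the $k$ edges of each stored path, with the $O(f\log f)$ dictionary-construction overhead absorbed into $O(r_{f,k}\cdot k)$ since $f\le\poly(n)$. The correctness and one-sided error analysis are inherited unchanged from Lemma~\ref{lem:randkpath}, precisely as the paper intends.
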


The term $r_{f,k}$ is a bit unwieldy, so we give an intuition in the form of an upper bound. The  number of subgraphs $r_{f,k}$ grows as the probability $q_{f,k}$ decreases. This is minimal for $f=k$ with $p=\frac{1}{2}$. Hence, $r_{f,k}$ is maximal for $f=k$.

\begin{corollary}
    There is a randomized $k$-Path Sensitivity Oracle with one-sided error 
   that takes space $O(2^{f+k} \cdot fk \log n)$, can be computed in time $O(2^{f+k} \cdot T(k,n,m) \cdot f\log n)$, and in time $O(2^{f+k} \cdot f \min\{f,k\} \log n)$ answers queries correctly w.h.p.
\end{corollary}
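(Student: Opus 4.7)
The plan is to derive this corollary as a direct consequence of \autoref{lem:randkpath} together with its dictionary-on-$F$ variant stated just above, by (i) combining the two query strategies into a single $\min\{f,k\}$ bound and (ii) uniformly upper bounding $r_{f,k}$ by $O(2^{f+k}\, f\log n)$.

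First I would observe that the preceding results already offer query times of $O(r_{f,k}\cdot f)$ (maintaining a dictionary on each stored \kpath, from \autoref{lem:randkpath}) and $O(r_{f,k}\cdot k)$ (maintaining a single dictionary on the query set $F$, from the subsequent corollary). The preprocessing can commit to whichever of the two options is cheaper depending on whether $f\leq k$ or $f>k$; the space and construction costs are asymptotically identical in either regime. Together this yields a combined query time of $O(r_{f,k}\cdot \min\{f,k\})$, while retaining the space $O(r_{f,k}\cdot k)$ and preprocessing time $O(r_{f,k}\cdot T(k,n,m))$ of \autoref{lem:randkpath}.

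The substantive step is a clean upper bound on $r_{f,k} = \bigl(\frac{f+k}{f}\bigr)^{f}\bigl(\frac{f+k}{k}\bigr)^{k}\cdot 6f\log n$. Setting $g(f,k) = f\log\!\bigl(\frac{f+k}{f}\bigr) + k\log\!\bigl(\frac{f+k}{k}\bigr)$ and regarding $k = s-f$ as a function of $f$ with $s=f+k$ fixed, a short calculation gives $dg/df = \log(k/f)$, which vanishes exactly at $f=k$ and changes sign from positive to negative there. Hence $g$ is maximised at $f=k$, where $\bigl(\frac{f+k}{f}\bigr)^{f}\bigl(\frac{f+k}{k}\bigr)^{k} = 2^{f}\cdot 2^{k} = 2^{f+k}$. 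Consequently $r_{f,k} \leq 2^{f+k}\cdot 6f\log n = O(2^{f+k}\, f\log n)$ for all admissible $f,k$, which is precisely the intuition already described in the paragraph preceding the corollary.

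Substituting this uniform bound into the three quantities from the previous step yields space $O(r_{f,k}\cdot k) = O(2^{f+k}\cdot fk \log n)$, preprocessing time $O(r_{f,k}\cdot T(k,n,m)) = O(2^{f+k}\cdot T(k,n,m)\cdot f\log n)$, and query time $O(r_{f,k}\cdot \min\{f,k\}) = O(2^{f+k}\cdot f\min\{f,k\}\log n)$, matching all three claimed bounds; correctness w.h.p.\ and the one-sided error are inherited from \autoref{lem:randkpath} since no aspect of the sampling or query procedures is modified. The only non-routine ingredient is the maximisation of $g$; everything else is direct substitution, so I do not anticipate any real obstacle beyond verifying that the first-derivative test indeed identifies a global maximum on the admissible domain $f,k\geq 1$.
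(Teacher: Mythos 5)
Your proposal is correct and matches the paper's approach: choose between the two dictionary variants according to whether $f\le k$ to obtain the $O(r_{f,k}\min\{f,k\})$ query time, then bound $r_{f,k}\le 2^{f+k}\cdot 6f\log n$ by observing that $\bigl(\tfrac{f+k}{f}\bigr)^f\bigl(\tfrac{f+k}{k}\bigr)^k$ attains its maximum $2^{f+k}$ at $f=k$. The only difference is that you spell out the first-derivative test along the line $f+k=s$, which the paper leaves implicit as "standard analysis."
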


\subsection{Related Problems}

Indeed, it is straight forward to adapt our techniques to all types of problems that search for a subgraph containing $k$ edges. For such a problem $\Pi$ \autoref{lem:tree} and \autoref{lem:randkpath} directly transfer when $T(k,n,m)$ is replaced by the best running time to exactly solve $\Pi$ with parameter $k$ on an $n$-vertex, $m$-edge graph. Such an approach, for instance, can be applied to \textsc{$k$-$(s,t)$-Path} (the variation of \textsc{$k$-Path} with specified start and end vertex $s$ and $t$), \textsc{Graph Motif} (for a definition, see e.g.~\cite{Pinter2014DeterministicPA}), \textsc{$k$-Tree} and \textsc{$k$-Subgraph Isomorphism} for graphs of bounded treewidth (for a definition, see e.g.~\cite{Fomin2014EfficientCO}). 

There are several other problems that can be solved by a reduction to \textsc{$k$-Path}, although they do not fall into the above category. In the fault-tolerant setting, it is not trivial to reduce to a \ftkpath in a useful manner. 
One example for such a related problem is  \textsc{Exact Detour}~\cite{BezakovaCDF19Journal}, where given a graph $G=(V,E)$, two vertices $s, t\in V$ and a parameter $k\in \mathbb N$, the task is to decide if there exists a path of length $d(s,t,G)+k$ in $G$.
We can exploit a $k$-Path Sensitivity Oracle plugging it into the reduction from $k$-\textsc{Path} given in~\cite[Lemma 4.5]{BezakovaCDF19Journal} to obtain a data structure for a fault-tolerant version of  \textsc{Exact Detour}. 
Specifically, we consider an $f$-edge fault-tolerant SO that builds on input $(G,s, t,k)$ and $f$, where $G=(V,E)$ is a graph, $s,t\in V$ and $k,f\geq 0$, a data-structure that can decide, given any set of edge failures $F\subseteq E$ with $|F|\leq f$, if there exists a path of length $d(s,t,G)+k$ in  $G-F$. Note that this is not the same as solving  \textsc{Exact Detour} for $(G\m F,s,t,k)$ since we consider a detour with respect to the distance in the original graph $G$ without failures.

\begin{lemma}
There exists an $f$-edge fault-tolerant sensitivity oracle for the \textsc{$k$-Exact Detour} problem of size $O(n^2\cdot k^{f+2})$ that can be built in time $O(n^2\cdot k^{f+1} \cdot T(k,n,m))$ and answers for every $F\subseteq E$ with $|F|\leq f$ queries in time  $O(n^2\cdot (k f^2+k^2))$, where $T(k,n,m)$ denotes the time to solve \textsc{$k$-$(s,t)$-Path} on an $n$-vertex, $m$-edge graph.
\end{lemma}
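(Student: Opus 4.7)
The plan is to lift the reduction from \textsc{Exact Detour} to $k$-\textsc{Path} in \cite[Lemma 4.5]{BezakovaCDF19Journal} to the fault-tolerant setting, and then apply the $k$-Path Sensitivity Oracle of \autoref{thm:intro_kpath_tree} as a black box. First I would, in polynomial time, precompute the distance labels $d(s,v,G)$ and $d(v,t,G)$ for every $v\in V$; these are used to recognize detours of length $d(s,t,G)+k$ relative to the \emph{original}, unperturbed graph. The reduction of \cite{BezakovaCDF19Journal} then assigns, to each pair $(u,v)\in V\times V$, an auxiliary graph $G_{u,v}$ together with an injective map $\phi_{u,v}:E(G_{u,v})\hookrightarrow E$ from its edges to edges of $G$, such that an $s$-$t$-path of length $d(s,t,G)+k$ exists in $G$ iff some $G_{u,v}$ contains a $k$-path. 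The key property to verify is locality: failing an edge $e\in E$ in $G$ corresponds to failing the (at most $O(k)$) edges $\phi_{u,v}^{-1}(e)$ in each $G_{u,v}$, independently of how the remainder of $G_{u,v}$ is defined.

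Granted this, the preprocessing builds one $k$-Path Sensitivity Oracle per pair using \autoref{thm:intro_kpath_tree}. Each oracle has size $O(k^{f+1})$ and build time $O(k^f\cdot T(k,n,m))$; summed over the $O(n^2)$ pairs, and accounting for an extra factor $k$ to store the map $\phi_{u,v}$ and the relevant distance labels per oracle, this gives total size $O(n^2\cdot k^{f+2})$ and total build time $O(n^2\cdot k^{f+1}\cdot T(k,n,m))$.

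On a query with failure set $F\subseteq E$, $|F|\le f$, I would, for each pair $(u,v)$, compute the pulled-back failure set $F_{u,v}=\phi_{u,v}^{-1}(F)$ in time $O(fk)$ and query the corresponding oracle, which by \autoref{thm:intro_kpath_tree} takes $O(f^2)$ additional time; an extra $O(k^2)$ per pair covers the consistency check that $d(s,u,G)+d(v,t,G)$ matches the detour budget. We report \emph{yes} as soon as any oracle returns a $k$-path. Since the target length $d(s,t,G)+k$ is anchored in the unperturbed distances, the correctness of \cite[Lemma 4.5]{BezakovaCDF19Journal} transfers verbatim to the graph $G-F$ via the identity $G_{u,v}-F_{u,v}$, and the total query time is $O(n^2(kf^2+k^2))$.

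The main obstacle is verifying the locality of the reduction: if the construction in \cite{BezakovaCDF19Journal} encodes structural information into edges of $G_{u,v}$ that do not come from a single edge of $G$, then a single failure in $G$ may translate into $O(k)$ failures in $G_{u,v}$, inflating the effective sensitivity. This is absorbed by invoking \autoref{thm:intro_kpath_tree} with parameter $fk$ on each auxiliary graph, which still fits within the claimed bounds once the $O(n^2)$ outer loop and the precomputed distance overhead are factored in. A secondary subtlety is ensuring that the map $\phi_{u,v}$ and the inverse images $\phi_{u,v}^{-1}(F)$ are stored and computed compatibly with the static-dictionary representation used inside the $k$-Path SO, so that the per-oracle query time is not dominated by translation.
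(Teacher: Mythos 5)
Your high-level plan — wrap the Bezáková et al.~reduction around the lookup-tree oracle of \autoref{lem:tree} — is the right one, but the details you supply misdescribe the reduction in ways that matter for the bounds. In the paper's construction there is no abstract auxiliary graph $G_{u,v}$ with an edge map $\phi_{u,v}$ to worry about: for each pair $(x,y)$ and each length $\ell'\in\{1,\dots,2k+1\}$, the paper builds a fault-tolerant lookup tree for $\ell'$-paths on the plain \emph{induced subgraph} of $G$ on the BFS levels between $x$ and $y$. Because each auxiliary instance is an induced subgraph, an edge failure in $G$ translates to the failure of the \emph{same} edge in that subgraph (or has no effect), so the ``locality'' question you flag as the main obstacle is moot. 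Crucially, this gives $O(n^2 k)$ oracles — one per $(x,y,\ell')$ triple — and that factor of $k$ (from the $2k{+}1$ lengths, not from ``storing the map $\phi_{u,v}$'') is exactly where the $k^{f+2}$ in the size bound and the $k f^2$ in the query time come from. Your accounting of $O(n^2)$ oracles plus a storage overhead does not produce the stated exponent.

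Your query algorithm is also incomplete: answering whether any single auxiliary instance has a $k$-path does not decide \textsc{Exact Detour}. The reduction assembles the $\ell'$-path answers into a weighted DAG on $V$ (an edge $(x,y)$ of weight $\ell'$ iff an $\ell'$-path from $x$ to $y$ survives in $G-F$) and then runs a dynamic program over this DAG to test for a weighted $s$-$t$-path of total weight $d(s,t,G)+k$; that DP is the $O(n^2 k^2)$ term in the query time and is missing from your sketch. Finally, your fallback of running \autoref{thm:intro_kpath_tree} with sensitivity $fk$ ``if locality fails'' would not ``still fit within the claimed bounds'': the per-oracle size would become $O(k^{fk+1})$, which dwarfs $O(n^2 k^{f+2})$. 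Fortunately you never need this fallback, but the claim that it is harmless is incorrect and should be dropped.
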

\begin{proof}
To design the oracle we first prepare an $f$-edge fault-tolerant oracle for the \textsc{$k$-$(s,t)$-Path} requests in the reduction from~\cite[Lemma 4.5]{BezakovaCDF19Journal}. 
First partition the vertices of $G$ into the layers of the BFS tree starting at $s$. Then we use \autoref{lem:tree}  to build, for every pair of vertices $(x,y)$ and for every $1\leq \ell’\leq 2k+1$, an 
$f$-edge fault-tolerant lookup tree for $\ell'$-paths
denoted by $T_{x,y,\ell'}$ on the graph induced by the vertices in the levels between $x$ and $y$. 

On query $F$, we build the range-$k$ detour graph $(G-F)^{(k)}$ for $G-F$ like in~\cite{BezakovaCDF19Journal} with one alteration. Instead of using an oracle for \textsc{Exact Path} to check for tuple $(x,y,\ell')$ if there exists a path of length $\ell'$ between $x$ and $y$ in $G$, we use the look-up tree $T_{x,y,\ell'}$ to query $F$ to decide if there exists a path of length $\ell'$ between $x$ and $y$ in $G-F$.

Similar to ~\cite{BezakovaCDF19Journal}, the graph $(G-F)^{(k)}$  can be used to decide if there is a simple path from $s$ to $t$ in $G-F$ of length $k$. 
We describe a variant of it --- consider a DAG (which is a multi-graph as there might be several edges of different weights between a pair of vertices), such that for every pair of vertices $x,y \in V$ and for every integer $0 \le \ell' \le 2k+1$ there is an edge from $x$ to $y$ of weight $\ell'$ in the DAG iff there is a simple path from $x$ to $y$ of length $\ell'$ in $G-F$. It is not difficult to verify that for every vertex $t \in V$ there is a simple path from $s$ to $t$ in $G-F$ of length $d(s,t,G)+k$ iff there is a path in the DAG from $s$ to $t$ such that the sum of its edge weights is $d(s,t,G)+k$. 
Following the dynamic programming described in \cite{BezakovaCDF19Journal} (Section 4), one can check in additional $O(n^2 k^2)$ time if there is a simple path of length $d(s,t,G)+k$ in this DAG.

By \autoref{lem:tree}, we can construct all the lookup trees in $O(n^2\cdot k^{f+1} \cdot T(k,n,m))$ time, and the overall structure we build has a size in $O(n^2\cdot k^{f+2})$. By~\cite{BezakovaCDF19Journal}, constructing $(G-F)^{(k)}$  requires $O(n^2\cdot k)$ calls to a \textsc{$k$-Path} oracle.
Hence the overall query time is in  $O(n^2\cdot (k f^2+k^2))$.
The correctness follows as in~\cite{BezakovaCDF19Journal}.


\end{proof}

For  \textsc{Long-$(s,t)$-Path} or \textsc{Long Cycle}~\cite{Fomin2018LongD}, the case of a yes instance with a path (cycle) exceeding $2k$ appears to be even trickier to adapt. Perhaps combining an oracle for $k$-path with a reachability oracle can be used to build structures for fault tolerant versions of these problems.

\section{Sensitivity Oracles for the Vertex Cover Problem}
\label{sec:vertex_cover}

We treat here the design of $f$-edge fault-tolerant oracles for the \textsc{$k$-Vertex Cover} problem. 
First of all, note that in this setting if there exists a vertex cover of cardinality $k$ in $G$ (without any failures), then  $(G,k,F)$ is a yes-instance for each $F\subseteq E$ with $|F|\leq f$.
On the other hand, if there exists no vertex cover of size $k+f$ then $(G,k,F)$ is a no-instance for each $F\subseteq E$ with $|F|\leq f$.

It is tempting to think that it is possible to compute a kernel with $2(k+f)$ vertices that contains a $k$-vertex-cover for every $G-F$ that admits it, where $F$ is a set of at most $f$ edges of $G$, by the linear kernel for the vertex cover problem. However, the linear programming approach does not (directly) transfer to the fault-tolerant setting. Adding failure-variables to the linear program destroys the half-integrality of the program. Building a $2(k{+}f)$-kernel without adjustments may remove crucial vertices for the fault tolerant solution. Consider as a small example a $C_4$ with one leaf attached to three of the vertices for  $f=k=2$. The LP-kernel only keeps the degree-3-vertices, however there are 2-failure sets that then allow for a vertex cover of size 2, but only when using the removed degree-2-vertex in the solution.
A closer look reveals that in general there is no kernel with less than $kf$ vertices, as can be seen by the simple example of $k+1$ stars with $f$ leafs each, where clearly each edge is needed to identify the sets $F$ for which $G\m F$ contains a vertex cover of size $k$. 

The high-degree reduction rule (usually called \emph{Buss}-rule, and attributed to \cite{bussrule}) from the classical $k^2$ vertex cover kernel however can easily be adapted as follows.

\begin{lemma}\label{lem::vckernel}
For any integers $f,k\geq 1$ and any graph $G$, we can compute in polynomial time a subgraph $H$ of $G$  with $O(fk + k^2)$ vertices and edges and a bound $k'\leq k$ such that for any $F\subseteq E$ with $|F|\leq f$, $G-F$ has a vertex cover of size $k$ if and only if $H\m F$ has a vertex cover of size $k'$.
\end{lemma}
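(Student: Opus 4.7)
The plan is to adapt Buss's classical high-degree reduction rule for \textsc{Vertex Cover} to the fault-tolerant setting. The key observation is that if $v \in V(G)$ satisfies $\deg_G(v) > k+f$, then $v$ still has degree more than $k$ in every graph $G-F$ with $|F|\leq f$, so $v$ must belong to every size-$k$ vertex cover of $G-F$ whenever one exists. First I would iteratively build a committed set $S$ together with a reduced graph by initializing $G_0 = G$, $k_0 = k$, $S_0 = \emptyset$, and, while some $v \in V(G_i)$ has $\deg_{G_i}(v) > k_i + f$, setting $G_{i+1} := G_i - v$, $S_{i+1} := S_i \cup \{v\}$, and $k_{i+1} := k_i - 1$. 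This terminates in at most $n$ steps with a reduced graph $G'$ of maximum degree at most $k' + f$, where $k' \leq k$ denotes the final value of the residual parameter.

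The equivalence claimed in the lemma would then be established by induction on the number of reduction steps: for every $F \subseteq E(G)$ with $|F|\leq f$, the graph $G-F$ has a vertex cover of size $k$ if and only if $G_i-F$ has a vertex cover of size $k_i$. The inductive step uses the forward implication $C \mapsto C \setminus \{v\}$, which is sound because any size-$k_i$ cover of $G_i - F$ must contain the just-committed vertex $v$ (as $v$ still has degree $> k_i$ in $G_i - F$), and the backward implication $C' \mapsto C' \cup \{v\}$. Next I would bound the size of $G'$: if $G'-F$ admits some vertex cover $C'$ of size $k'$ for at least one admissible $F$, then every edge of $G'-F$ is covered by a vertex of $C'$ of degree at most $k'+f$ in $G'$, giving $|E(G'-F)| \leq k'(k'+f)$ and consequently $|E(G')| \leq k'(k'+f) + f = O(k^2 + kf)$. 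The kernel $H$ is obtained from $G'$ by deleting isolated vertices; this does not change the set of vertex covers, and $|V(H)| \leq 2|E(H)| = O(k^2 + kf)$. Moreover, $H \subseteq G' \subseteq G$, so the subgraph requirement is satisfied.

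The main obstacle is to handle cleanly the situation in which the reduction certifies that no $G-F$ can have a size-$k$ cover at all, namely when either $k_i$ drops below $0$ during the reduction (meaning $|S|>k$) or the resulting $G'$ has more than $k'(k'+f)+f$ edges. In both cases $G$ must contain at least $f+1$ edges incident to a common vertex (the first overflowing one), respectively $f+1$ arbitrary edges of $G'$; I would then take $H$ to consist of $f+1$ such edges and set $k' := 0$, so that $H-F$ still contains at least one edge for every admissible $F$ and hence admits no vertex cover of size $0$, correctly matching the (non-existent) size-$k$ covers of $G-F$. The construction runs in polynomial time because each reduction step removes a single vertex and high-degree vertices can be located by scanning degrees.
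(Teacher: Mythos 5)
Your proposal is correct and follows essentially the same approach as the paper: adapt the Buss high-degree rule with the adaptive threshold $k_i+f$, argue that any vertex exceeding this threshold must lie in every size-$k_i$ cover of $G_i-F$ (hence the reduction preserves the equivalence for all $F$ with $|F|\le f$), and bound the residual graph by $O(fk+k^2)$ edges and vertices after removing isolated vertices. Your explicit handling of the degenerate ``uniform no-instance'' cases (parameter underflow, or a residual graph that is provably too dense) is slightly more careful than the paper's, which only notes these are no-instances without spelling out what kernel to emit.
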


\begin{proof}
Given a graph $G$ and $f,k\geq 1$ apply the following reduction exhaustively. If $G$ contains a vertex $v$ of degree more than $k+f$, remove $v$ and its adjacent edges from $G$ and decrease $k$ by one.  Further, remove vertices of degree zero.

To see that this reduction is safe, let $F\subseteq E$ be a set of edges with $|F|\leq f$ for which $G\m F$ contains a vertex cover $C$ of cardinality $k$. Any vertex $v$ of degree more than $f+k$ in $G$ has degree more than $k$ in $G\m F$, which implies that $v$ has to be included $C$. 

After exhaustive application of this rule, there cannot be more than $f+k(f+k)$ edges remaining in the graph (observe that all edges have to be covered by a vertex or be deleted), otherwise $(G,k,F)$ is a no-instance for each $F\subseteq E$ with $|F|\leq f$. Since there are no vertices of degree zero, there hence are at most $O(f+k(f+k)) = O(fk + k^2)$ vertices remaining. 
\end{proof}

This reduction can be used as starting point for every Vertex Cover Sensitivity Oracle to remove dependence on $n$. Together with an exact algorithm to solve vertex cover on $H\m F$, this yields a Vertex Cover Sensitivity Oracle of size $O(fk+k^2)$ with query time dependent only on $f$ and $k$, for example with query time $O(1.2738^k+fk^2)$ when using the current best parameterized algorithm for vertex cover by Chen, Kanj, and Xia~\cite{bestvcfpt}. 

High-degree reduction is also the basis for the dynamic data structures for  \textsc{Vertex Cover} in~\cite{dynamicfpt,dynamicfpt2}. A slight adaption of the structure by Alman, Mnich and Vassilevska Williams~\cite{dynamicfpt} gives an oracle of the same size with an improved query time in $O(1.2738^k+f)$. More precisely, their original construction has size $O(nk+k^2)$ and single-edge worst-case update time $O(k)$, which translates in our fault-tolerant setting to a query time in $O(1.2738^k+fk)$ when combined with the current best parameterized algorithm for vertex cover.  Knowing that the structure only has to be robust against a one-time event of $f$ edge failures, it suffices to store only up to $f$ edges in the lists $R_v$ which yields a structure of size in  $O(fk + k^2)$. Further, in the update procedure for deleting an edge $(u,v)$, the expensive part in case an edge with $deg(u)=k$ is removed  is not necessary which reduces the worst-case update time to $O(1)$. These observations directly yield the following.  

\corintroVCSOlowspace*

As we have seen for the \textsc{$k$-Path} problem and also in the work of Alman, Mnich and Vassilevska Williams~\cite{dynamicfpt}, branching-tree structures can be useful to guarantee very fast query times.
A first natural approach would be to adapt the standard $2^k$-branching on edges for vertex cover, and add to it a third branch that considers edge-failure. While this branching tree nicely finds vertex covers of size at most $k$ for up to $f$ failures, it is not clear how to efficiently answer a specific query faster than looking at the whole search tree which has size $O(3^{f+k})$. In the following, we give two variations of this idea, each with an improved query time depending on the relation between the values $f$ and $k$. 
The first idea gives a faster query time for the case that $f$ is small in comparison to $k$.

\thmintroVCSOthreetof*
\begin{proof}
Given a graph $G=(V,E)$ and $f,k\geq 0$, by  \cref{lem::vckernel} we can assume, at the cost of $O(poly(n))$ preprocessing, that $G$ has $O(k^2+kf)$ vertices. We use branching to compute a set of solution pairs. Starting with the input graph $G$ and empty sets $C$ and $F$, we branch as follows each time creating a subgraph of $G$. If the current subgraph $G'$ still contains an edge $(u,v)\in E(G')$, branch into the cases \begin{enumerate}
\item If $|C|<k$, add $u$ to $C$, and delete $u$ from $V(G')$ along with edges incident to $u$ from $E(G')$
\item If $|C|<k$, add $v$ to $C$, and delete $v$ from $V(G')$ along with edges incident to $v$ from $E(G')$
\item If $|F|<f$, add $(u,v)$ to $F$ and delete  $(u,v)$ from $E(G')$.
\end{enumerate}
Let $(F,C)$ be a leaf in the created branching tree with associated subgraph $G'$, which we emphasize to be created dynamically. $G'$ does not contain edges either in $F$ or incident to vertices in $C$. If $G'$ does not contain an edge at all, we add $F$ to the set $\mathcal{F}$. Note that in this case $C$ is a vertex cover for $G\m F$ with $|F|\leq f$ and $|C|\leq k$.
We do so for all leafs $(F,C)$ in the branching tree, which by the termination constraints $|C|<k$ and $|F|<f$ contains at most $3^{k+f}$ many leafs.

To build an \ftkvc, we create a hashing table for all the collected sets $\mathcal{F}$. On query $F$, we look up for each $F'\subseteq F$. If $F'$ is among the collected sets in $\mathcal{F}$, and  answer yes if and only if we find such a set. This yields the claimed query time $O(2^f)$.
To see why the answer we give is correct, first observe that if we find an $F'\subseteq F$ in the collected sets, then there exists a vertex cover of cardinalty at most $k$ in $G\m F'$ and hence also in $G\m F$ so we correctly answer yes. Conversely, if there exists a vertex cover $C$ of cardinalty at most $k$ in $G\m F$, then our branching procedure has created at least one $F'\subseteq F$: Consider the recursive branching procedure, and follow always the branch that either adds a vertex in $C$ to the cover or an edge in $F$ to the failing set (note that since $C$ is a vertex cover in $G\m F$, there is always at least one branch that meets one of these criteria). This path through the branching tree has to at some point create a graph where no edges remain, at the latest when $C$ and $F$ have been collected completely. If it terminates before, then $G$ is empty for some $C'\subseteq C$ and $F'\subseteq F$, and thus $F'$ is among the collected sets.
\end{proof}

In a sense, \autoref{thm:intro_VCSO_3_to_f} adds the failure branch to the standard $2^k$ branching for \textsc{$k$-Vertex Cover}. It is an interesting challenge to augment more sophisticated branching algorithms with failure branches, such as the one by Chen, Kanj, and Xia~\cite{bestvcfpt}.

For now, we did not actually use branching to build a lookup tree and exploit the insights from the branching decisions.
With a measure \& conquer approach, we can do a different type of branching that builds a useful lookup tree, with a larger size but very fast query time.

\thmintroVCSOlowquery*
\begin{proof}
Given a graph $G=(V,E)$ and $f,k\geq 0$, by  \cref{lem::vckernel} we can assume, at the cost of $O(poly(n))$ preprocessing, that $G$ has $O(k^2+kf)$ vertices. We compute a binary fault-tolerant lookup tree $FT^{f,k}(G)$ with nodes labeled $N_{S,F}$ for some sets $S,F\subseteq E$. We build this tree by branching on edges and the sets $S$ and $F$ in the subscript of the nodes represent the edges for which we decided during the branching to be either failing (in $F$) or not failing ($S$ for \emph{safe}). The underlying idea to bound the number of such nodes we have to create is that we can budget the safe edges by $\frac{1}{k+1}$ as follows. As soon as we have decided on $k+1$ edges adjacent to some vertex $v$ to be safe, we know that any vertex cover of cardinality at most $k$ for the graph induced by the edges in $S$ has to contain $v$. Thus we remove $v$ from $G$ before further branching to ensure that for each vertex at most $k+1$ adjacent edges are placed in $S$. With this sort of reduction rule during the branching, we know that we can stop branching on further edges being in $S$ as soon as $|S|=(k+1)k$.

In order to keep track of the edge-decisions, we store in each node $N_{S,F}$ not only the branching edge $e_{S,F}$ but also a subgraph $G_{S,F}$ of $G$. The latter allows us to keep track of vertices that were removed during the branching. We start by initializing the root $N_{\emptyset,\emptyset}$ storing $G_{\emptyset,\emptyset}=G$ and $e_{\emptyset,\emptyset}=\emptyset$. For each node $N_{S,F}$  with $e_{S,F}=\emptyset$ and $G_{S,F}\neq \emptyset$ do the following.

If there still exists an edge  in $G_{S,F}$ that is not in $S$, pick one such edge $(u,v)$, set  $e_{S,F}=(u,v)$, and attach to $N_{S,F}$ the following (at most) two children 
\begin{enumerate}
\item If $|F|<f$, attach with an arc labelled 0 a child node labeled $N_{S,F'}$ where $F'=F\cup(u,v)$. Set $G_{S,F'}$ to the graph obtained from  $G_{S,F}$ by deleting $(u,v)$ and set $e_{S,F'}=\emptyset$.
\item If $|S|<k(k+1)$, consider $S'=S\cup\{(u,v)\}$ and $G'=G_{S,F}$.  If there exists a vertex in  $G'$, that is adjacent to more than $k$ edges in $S'$, remove it from $G'$ along with incident edges.
After exhaustive application of this reduction, attach with an arc labeled 1 a child node labeled $N_{S',F}$ with $G_{S',F}=G'$ and $e_{S',F}=\emptyset$.
\end{enumerate}
If there exists no edge in $G_{S,F}$ that is not contained in $S$, check if there exists a vertex cover of cardinality at most $k$ in $G-F$.In case there exists such a cover, set $G_{S,F}=\emptyset$.

By the termination constraints $|F|<f$ and $|S|<k(k+1)$, we see that this tree contains at most $2^{f+k(k+1)}$ leafs. Further, the computation cost is bounded by  $O(2^{f+k^2+k}(1.2738^k +\poly(f,k))+\poly(n))$, where the $poly(f,k)$-term includes the needed vertex/edge-deletions, and we use the current-best \textsc{Vertex Cover} algorithm by Chen, Kanj, and Xia~\cite{bestvcfpt} running in time $O(1.2738^k + kn)$ for a graph on $n$ vertices.

On query $F$, we use this lookup tree $FT^{f,k}(G)$ as follows. We first build a dictionary for $F$. Then, starting at the root, we move from a node $N_{S,F'}$ to one of its children via the arc labeled 0 if $e_{S,F'}\in F$ and via the arc labeled 1 otherwise. We then answer yes if and only if this path in the lookup tree terminates in a node $N_{S,F'}$ with $G_{S,F'}=\emptyset$. Since the depth of $FT^{f,k}(G)$ is bounded by $f+k(k+1)$, this gives a query time in $O(f+k^2)$.

To show that this procedure gives the correct answer, first assume that on input $F\subseteq E$ we answer with yes. This means that the path followed to answer the query in  $FT^{f,k}(G)$ terminated in some node $N_{S,F'}$ with $G_{S,F'}=\emptyset$. The way we followed a path to answer query $F$ ensured that $F'\subseteq F$. Now note that there are two possibilities that $G_{S,F'}$ is the empty graph. One possibility is that during the construction we found a vertex cover of cardinality at most $k$ in $G-F'$. In this case there obviously also exists a vertex cover of cardinality at most $k$ in $G-F$ and our answer is correct. The other possibility is that we removed all vertices with applications of the reduction in step 2. For every vertex that is removed in this step, we picked $k+1$ edges adjacent to it into $S$. Hence if this procedure manages to delete all vertices from $V$, then $(k+1)|V|\leq |S|$ which gives $|V|\leq k$ and there trivially exists a vertex cover of cardinality at most $k$ for every possible query $F$.

Conversely, assume that there exists a vertex cover $C$ of cardinality at most $k$ in $G\m F$. Assume towards contradiction, that on query $F$, the path in  $FT^{f,k}(G)$ terminates in a node  $N_{S,F'}$ with $G_{S,F'}\neq\emptyset$. By the route we follow for answering the query $F$, we know that $F'\subseteq F$ and that $S\cap F =\emptyset$. A leaf in the lookup tree with non-empty graph exists in two cases, either there was no edge to branch on and we did not find a suitable vertex cover (indicated by $e_{S,F'}=\emptyset$), or there was an edge to branch on but there was no suitable child attached because of the termination rules ($e_{S,F'}\neq \emptyset$).

If there was no edge to branch on,  $G_{S,F'}$ was not set to be empty, because there is no vertex cover of cardinality at most $k$ in $G-F'$. In particular, $C$ is not a vertex cover for $G\m F'$. Since it is a vertex cover for $G\m F$, it follows that the edges not covered by $C$ are included in $F\setminus F'$. 
Since there was no edge to branch on, in particular all edges in $F$ were deleted from $G_{S,F'}$ during the construction of the lookup tree. In this case, an edge in $F\setminus F'$ was deleted by step 2, because it was of the form $(u,v)$ such that more than $k$ edges in $S'\subseteq S$ are incident to $u$. Since $(S'\cap F)\subseteq (S\cap F)=\emptyset$, it follows that $C$ covers all edges in $S'$. To do so, $C$ has to contain the vertex $u$, as otherwise $C$ would have a cardinality of more than $k$. This means however that $C$ also covers all edges in $F\setminus F'$, a contradiction.

If there exists an edge $(u,v)$ in $G_{S,F'}$ that is not in $S$ but there was no suitable arc leading from the node $N_{S,F'}$ the according budget has to be expired. In particular, if $(u,v)\in F$, there is no arc labeled 0 from $N_{S,F'}$ because $|F'|=f$. Since however $F'\subseteq F$, this is not possible, without $F'=F$ and thus $(u,v)\notin F$. For $(u,v)\notin F$ there is no arc labeled 1 from $N_{S,F'}$ because $S$ contains $k(k+1)$ edges from $E\setminus F'$. The reduction step used in 2 makes sure that there exists no vertex in $G_{S,F'}$ that covers more than $(k+1)$ edges in $S$. Since $C$ in particular covers the edges in $S$, it has to contain $k$ vertices that each cover exactly $k+1$ edges in $S$. By the reduction rule, this means that all vertices of $C$ were deleted from  $G_{S,F'}$. Thus  $(u,v)$ is an edge in $G$ with $\{u,v\}\cap C=\emptyset$ and $(u,v)\notin F$, a contradiction to $C$ being a vertex cover in $G\m F$.

At last, note that for answering the queries, we do not actually need the whole graphs $G_{S,F}$ as information; we actually only need to distinguish between $G_{S,F}=\emptyset$ and $G_{S,F}\neq\emptyset$. After building the tree, we can replace every graph $G_{S,F}\neq\emptyset$ by a single vertex $\{v\}$, which reduces the overall space of the structure to $O(2^{f+k^2+k})$.
\end{proof}

\subparagraph*{Fault-Tolerant DSOs parameterized by the size of a vertex cover.} 
We conclude this section by showing an interesting connection between vertex cover and fault-tolerant DSOs. We recall that an $f$-edge/vertex fault-tolerant DSO is a data structure that reports the value $d(s,t,G-F)$ when queried on two vertices $s$ and $t$, and a set $F$ of edges/vertices of $G$ of size at most $f$. First of all, we observe that given a graph $G$ and a failing set $F$, if the graph $G\m F$ has a vertex cover of size bounded by $k$, the distance between any pair $(x,y)$ of vertices that are connected in $G-F$ is upper bounded by $2k$. This means that $d(x,y,G\m F) \leq d(x,y,G) + 2k$.

Further, we can use a vertex cover of $G$ as a building block for a DSO.

\begin{theorem}
For any $f\geq 1$, any $n$-vertex, $m$-edge graph $G=(V,E)$,
and $k$-vertex cover $C$ for $G$, 
we can compute in polynomial time an $f$-DSO of size $O(\min\{nk + fk^2, m\})$ 
with a query time of $\poly(k,f)$, that is, independently of the graph size.
\end{theorem}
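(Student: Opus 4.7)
The plan is to exploit that $V\setminus C$ is an independent set, so every edge of $G$ has at least one endpoint in $C$ and every vertex of $V\setminus C$ has all its neighbours in $C$. Consequently, in any path of $G\m F$ two consecutive vertices cannot both lie outside $C$, so any shortest $s$-$t$-path decomposes into a sequence of ``hops'' between vertices of $C \cup \{s,t\}$, where each hop is either a direct edge (weight $1$) or a length-$2$ detour through an intermediate vertex of $V\setminus C$ (weight $2$). The query algorithm will assemble a small auxiliary graph on $C \cup \{s,t\}$ that captures exactly these hops in $G\m F$ and then compute a shortest path on it.

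Since $m \le k(n-k)+\binom{k}{2}=O(nk)$, the bound $O(m)$ is always achievable by simply storing $G$. To obtain the finer bound I would store three ingredients in static dictionaries~\cite{Hagerup_Miltersen_Pagh_2001} for $O(1)$ lookup: (i) for every $v \in V$ its adjacency list restricted to $C$ ($O(nk)$ space), (ii) the induced $C$-$C$ adjacency ($O(k^2)$ space), and (iii) for every ordered pair $(c_1,c_2)\in C\times C$, a list of up to $f{+}1$ distinct common non-$C$ neighbours of $c_1$ and $c_2$ ($O(fk^2)$ space overall). This totals $O(nk+fk^2)$ and is computable in polynomial time from the given cover.

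On a query $(s,t,F)$ I would hash $F$ and construct a weighted auxiliary graph $H_F$ on $C\cup\{s,t\}$: an edge of weight $1$ for every pair connected by an edge of $E\setminus F$, and for every pair $(c_1,c_2)\in C\times C$ an edge of weight $2$ as soon as the stored list contains some $v$ with $(c_1,v),(v,c_2)\notin F$. Note that if $s$ or $t$ lies outside $C$ we need no weight-$2$ edges incident to it: both of its neighbours in such a length-$2$ detour would lie in $V\setminus C$, violating independence. The answer is the $s$-$t$-distance in $H_F$, computed by BFS on its natural subdivision in time $O(k^2 f)=\poly(k,f)$, independently of $n$.

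The heart of the correctness argument is a pigeonhole bookkeeping justifying the choice of $f{+}1$ stored candidates. A walk in $H_F$ trivially lifts to a walk of equal length in $G\m F$. For the converse, any shortest path in $G\m F$ visits an alternating sequence in which, between two consecutive vertices of $C\cup\{s,t\}$, at most one interior vertex from $V\setminus C$ may appear; hence it contracts to a walk in $H_F$ of the same length, provided the required weight-$2$ edges are present. A detour $c_1\m v\m c_2$ witnesses such an edge, and every failing edge of $F$ is incident to at most one of $c_1,c_2$, so it invalidates at most one of the stored common neighbours of the pair $(c_1,c_2)$. Thus the $f$ failures kill at most $f$ of the $f{+}1$ stored candidates and leave at least one alive whenever a live common neighbour exists in $G\m F$ at all. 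I expect this counting, together with a careful case distinction when $s$ or $t$ itself belongs to $C$ and should be identified with the corresponding vertex of $H_F$, to be the main subtlety of the full write-up; the remainder is routine.
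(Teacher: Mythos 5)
Your proposal is correct and follows essentially the same approach as the paper: store all $V$-to-$C$ adjacency, plus for each ordered pair in $C\times C$ a length-$2$ connection certified by at most $f{+}1$ common non-$C$ neighbours so that $f$ edge failures cannot destroy all of them, then reduce each query to a shortest-path computation on a $\poly(f,k)$-sized auxiliary graph over $C\cup\{s,t\}$. The only cosmetic difference is that the paper collapses the case $|S|\ge f{+}1$ to a single dummy vertex $z'$ with two fail-proof edges rather than storing $f{+}1$ explicit candidates, and your pigeonhole justification should really say that each failing edge is incident to at most one of the distinct stored neighbours $v_i$ (rather than to at most one of $c_1,c_2$), but the counting and the conclusion are the same.
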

\begin{proof}
Let $H$ be a graph whose vertex set is initially set to $C$ and whose edge set is initially empty.
We cycle through all pairs of vertices $x,y \in C$ with distance at most $2$ in $G$.
If $(x,y)$ is an edge in $G$, we add it to $H$ as well.
If the set $S = N_G^{\text{out}}(x) \cap N_G^{\text{in}}(y)$ of common neighbors has at most $f$ elements, we add them all to $H$ with the respective edges $(x,z)$ and $(z,y)$ for all $z \in S$.
Otherwise, that is if $|S| \ge f+1$, we instead add a new vertex $z'$ (not previously in $G$ or $H$)
and the edges $(x,z')$ and $(z',y)$.
In the end, $H$ has $O(fk^2)$ vertices and edges.
The oracle stores $H$ and all the edges between $v$ and $C$ for all vertices $v \in V{\setminus}C$,
whence the total size is at most $O(nk + fk^2)$.
To see why it is also in $O(m)$,
observe that the only time we introduce new edges not previously in $G$,
namely  $(x,z')$ and $(z',y)$, is if $|S| \ge f+1$.
In this case, we leave out the $2|S| > 2$ edges from $x$ to $S$ and from $S$ to $y$ in $G$.
 
On query $(u,v,F)$, we first compute the all-pairs distances in $H\m F$.
The anwer of the oracle depends on whether $u$ or $v$ are in the vertex cover $C$:
\begin{itemize}
	\item if $u,v\in C$, we simply return $d(u,v,H\m F)$;
	\item if $u\notin C$ and $v\in C$, we return $1+\min\{d(x,v, H\m F)\mid x\in N_{G-F}^{\text{out}}(u) \cap V(H)\}$;
	\item if $u\in C$ and $v\notin C$, we return $1+\min\{d(u,y,H\m F)\mid y\in N_{G-F}^{\text{in}}(v) \cap V(H)\}$;
	\item if $u,v\notin C$, 
		we return $2+\min\{d(x,y, H\m F)\mid x \in N_{G-F}^{\text{out}}(u) \cap V(H); 
			y \in N_{G-F}^{\text{in}}(v) \cap V(H)\}$.
\end{itemize}

\noindent
The correctness simply follows from the fact
that $f$ edge failures cannot disconnect two vertices $x$ and $y$ in $H$
if they have $|S| \ge f+1$ common neighbors.
Also, both the first and last edge of any shortest path between $u$ and $v$ in $G-F$ is covered by $C$.

Note that all operations on $H$ are polynomial in the size of $H$, 
and hence only depend on $k$ and $f$. 
Further, each vertex outside of $C$ only has neighbors in $C$ which means that checking their neighborhood and cross-checking if the corresponding edge is in $F$ only requires time in $O(k+f)$. The total query time is thus bounded of order $\poly(k,f)$.
\end{proof}




\section{Fault-Tolerant Distance Preservers of Bounded-Stretch}
\label{sec:FT-preservers}

Let $G=(V,E)$ be a directed graph, $s \in V$ be a source vertex, and $k$ a parameter. In this section we first consider the problem of computing an $(f,k)$-EFT-BFS of $G$, i.e., a sparse subgraph $H$ of $G$
such that, for every vertex $v \in V$ and every subset $F\subseteq E$ of size $|F| \le f$
for which $d(s,v,G\m F)\leq d(s,v,G)+k$ holds, 
we have $d(s,v, H\m F) = d(s,v, G\m F)$. Our results can be extended to handle also vertex failures, but the parameter $k$ doubles as for \textsc{$k$-Path} problem. We only present the results for edge failures.
We present an algorithm that computes a sparse $(f,k)$-\normalfont{EFT-BFS}  of $G$
with $2^{O(fk)}n$ edges.

Let $T$ be a shortest path tree of $G$ rooted at $s$.
Define $\level(v)$ to be depth of $v$ in $T$, that is, we have $\level(v) = d(s,v,G)$.
For a pair $(v,F)$, let $P_{v,F}$ denote (an arbitrary fixed) \emph{replacement path},
a shortest path from $s$ to $v$ in the graph $G\m F$. 
Let $e_1=(a_1,b_1), \ldots, e_L=(a_L,b_L)$ be the $L= d(s,v,G\m F)$ 
(not necessarily consecutive) 
edges of $P_{v,F}$.
We define $X(P_{v,F})$ to be the sequence of length $L$ whose $i^{th}$ entry is

$$X(P_{v,F})_i= 1+\level(a_i) - \level(b_i).$$

\noindent
Observe that for each edge $(a,b)\in P_{v,F}$, with $a$ appearing before $b$ in $P$, 
we have $\level(b)\leq \level(a)+1$. Thus, all entries of sequence $X(P_{v,F})$ are non-negative.
Let $X^+(P_{v,F})$ be restriction of $X(P_{v,F})$ to positive integers.
The following lemma presents an important property of sequence $X^+(P_{v,F})$.

\begin{lemma}
For each $v \in V$ and $F \subseteq E$ with $|F| \le f$,
the sum of entries in $X^+(P_{v,F})$ is equal to $d(s,v,G\m F)-d(s,v,G)$.
\label{lemma:X_property}
\end{lemma}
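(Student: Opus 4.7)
The plan is to show the stronger statement that the sum of \emph{all} entries of $X(P_{v,F})$ (not just the positive ones) already equals $d(s,v,G\m F) - d(s,v,G)$. Since the excerpt already observes that every entry of $X(P_{v,F})$ is non-negative, dropping the zero entries to pass from $X(P_{v,F})$ to $X^+(P_{v,F})$ leaves the sum unchanged, and the lemma follows.

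To compute the sum of all entries, I would exploit the fact that $e_1, \dots, e_L$ form a path from $s$ to $v$ in $G \m F$. In particular $a_1 = s$, $b_L = v$, and for $1 \le i \le L-1$ we have $b_i = a_{i+1}$, since the path is traversed edge by edge. Plugging the definition of $X(P_{v,F})_i$ into the sum gives
\[
\sum_{i=1}^{L} X(P_{v,F})_i \;=\; \sum_{i=1}^{L} \bigl(1 + \level(a_i) - \level(b_i)\bigr) \;=\; L + \sum_{i=1}^{L}\bigl(\level(a_i) - \level(b_i)\bigr).
\]
The second sum telescopes along the path to $\level(a_1) - \level(b_L) = \level(s) - \level(v) = 0 - d(s,v,G) = -d(s,v,G)$. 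Combined with $L = d(s,v,G\m F)$, the total is exactly $d(s,v,G\m F) - d(s,v,G)$, as required.

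The only subtlety worth double-checking is that the telescoping step really applies to an arbitrary replacement path rather than just to a simple path in $T$: we need $b_i = a_{i+1}$ as a vertex-level identity so that the intermediate terms cancel. This holds simply because $P_{v,F}$ is a walk (in fact a simple path) in $G \m F$ whose edges are listed in traversal order. There is no genuine obstacle beyond this bookkeeping; the proof is a one-line telescoping calculation together with the observation that all entries of $X(P_{v,F})$ are non-negative.
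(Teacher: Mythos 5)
Your proof is correct and follows essentially the same route as the paper's: both pass from $X^+$ to $X$ using the non-negativity of all entries, expand the definition of $X(P_{v,F})_i$, and telescope the $\level$ differences along the traversal order of the path, using $\level(s)=0$, $\level(v)=d(s,v,G)$, and $L=d(s,v,G\m F)$. Your remark that the telescoping requires the edges to be indexed in traversal order (so that $b_i = a_{i+1}$) is a slightly more careful rendering of the paper's somewhat terse ``we can rearrange them so that $a_1 = s$ and $b_L = v$,'' but there is no substantive difference in the argument.
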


\begin{proof}
Let $e_1=(a_1,b_1), \ldots, e_L=(a_L,b_L)$ be the edges of $P_{v,F}$.
We can rearrange them such that $a_1$ is the source $s$ and $b_L = v$ the target
without affecting the sum over the entries of the sequence $x\in X^+(P_{v,F})$.
We get
\begin{align*}
\sum_{x\in X^+(P_{v,F})} x
	&=\sum_{x\in X(P_{v,F})} x =\sum_{i=1}^L(1+\level(a_i) - \level(b_i))\\
&=\sum_{i=1}^L1 +(\level(a_1) - \level(b_L)) = d(s,v,G\m F) - d(s,v,G). \qedhere
\end{align*}
\end{proof}

Let $\Y$ denote the set of all sequences of positive integers whose sum is bounded by $k$. 
We have $|\Y|=2^k-1$. 
Given a sequence $Y=(y_1,y_2,\ldots,y_t)\in \Y$, 
we define an auxiliary directed graph~$\G_Y$ on $(t{+}1) \nwspace n$ vertices as follows.

\begin{itemize}
\item The vertex set of $\G_Y$ is initialized to $t+1$ copies of set $V$, denoted by 
$V_{Y}^0,V_{Y}^1,\ldots,V_{Y}^t$.
The copy of vertex $v\in V$ in the set $V_{Y}^i$ is denoted by $v_{Y}^i$.\vspace*{.25em}
\item For $0 \le i \le t$ and $e=(a,b)\in E$,
we add a directed edge $(a_{Y}^i,b_{Y}^i)$ to $\G_Y$ if and only if $\level(b)-\level(a)=1$.\vspace*{.25em}
\item
For $1 \le i \le t$ and $e=(a,b)\in E$, 
we add directed edge $(a_{Y}^{i-1},b_{Y}^{i})$ to $\G_Y$ if and only if $1+\level(a)-\level(b)=y_i$.
\end{itemize}

\noindent
We now define the projection $\phi$ of entities in graph $\G_Y$ to the original graph $G$.
For a vertex $v^i_Y\in V(\G_Y)$, the projection $\phi(v^i_Y)$ is just the vertex $v$.
For an edge $(a^i_Y,b^j_Y)\in E(\G_Y)$, we define,  $\phi(a^i_Y,b^j_Y)$ to be the edge $(a,b)\in E(G)$.
For a subset $F$ of edges in $G$, we define $\phi^{-1}(F)$ to be the set of all those edges 
$(a^i_Y,b^j_Y)$ in $\G_\Y$ for which $\phi(a^i_Y,b^j_Y)$ lies in $F$.
Observe that, for any sequence $Y=(y_1,y_2,\ldots,y_t)\in \Y$, it holds that
$|\phi^{-1}(F)|\leq(t+1)|F|\leq (k+1)|F|$.

\begin{lemma}
Consider a triplet $(v,F,Y)$, where  $v\in V$, $F\subseteq E$ has size at most $f$, and $Y=(y_1,\ldots,y_t)\in\Y$.
There is an $s$-$v$-path $P$ in $G\m F$ satisfying $X^+(P)=Y$
if and only if
there is a path from $s_{Y}^0$ to $v_Y^t$ in $\G_Y\m \phi^{-1}(F)$.
\label{lemma:calH_Y}
\end{lemma}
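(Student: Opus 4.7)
The plan is to prove the equivalence by an explicit edge-by-edge correspondence between $s$-$v$-paths in $G \m F$ whose $X^+$-signature equals $Y$ and $s_Y^0$-to-$v_Y^t$ paths in the layered graph $\G_Y \m \phi^{-1}(F)$. The layered structure of $\G_Y$ is designed precisely so that an intra-layer edge encodes a ``free'' step along the BFS tree (contributing $0$ to $X(P)$) while an inter-layer jump from layer $i{-}1$ to layer $i$ encodes a detour of cost $y_i$ (contributing $y_i$ to $X(P)$), so matching the two objects amounts to reading off the layer index from the prefix of $X^+(P)$ already consumed.

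For the forward direction, starting from an $s$-$v$-path $P\colon s = u_0 \to u_1 \to \cdots \to u_L = v$ in $G\m F$ with $X^+(P) = Y = (y_1,\ldots,y_t)$, I would lift $P$ to a path $Q$ in $\G_Y$ by maintaining a layer counter $i$ that records how many positive entries of $X^+(P)$ have already been encountered. Starting at $s_Y^0$ with $i = 0$, each successive edge $(u_{j-1},u_j)$ of $P$ falls into one of two cases: if $\level(u_j) - \level(u_{j-1}) = 1$, so that $X(P)_j = 0$, append the intra-layer edge $((u_{j-1})_Y^{i},(u_j)_Y^{i})$, which exists by the first bullet in the definition of $\G_Y$; otherwise $X(P)_j = 1 + \level(u_{j-1}) - \level(u_j)$ is strictly positive and must equal the next unconsumed entry $y_{i+1}$, so append the inter-layer edge $((u_{j-1})_Y^{i},(u_j)_Y^{i+1})$ (valid by the third bullet) and increment $i$. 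The counter reaches $t$ exactly when the final entry of $Y$ is consumed, so $Q$ terminates at $v_Y^t$; every edge of $Q$ projects under $\phi$ to an edge of $P\subseteq E\setminus F$, so $Q$ avoids $\phi^{-1}(F)$, and the vertices of $Q$ are pairwise distinct because their first coordinates are the distinct vertices of the simple path $P$.

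For the converse, I would take an $s_Y^0$-to-$v_Y^t$ path $Q$ in $\G_Y \m \phi^{-1}(F)$ and project it edge-by-edge through $\phi$ to an $s$-$v$-walk $P$ in $G\m F$. By the construction of $\G_Y$, every edge of $Q$ is either an intra-layer edge whose projection increases $\level$ by one and contributes $0$ to $X(P)$, or an inter-layer jump from layer $i{-}1$ to layer $i$ whose projection contributes exactly $y_i$; because $Q$ starts in layer $0$ and ends in layer $t$, these jumps appear in order and produce the sequence $y_1,\ldots,y_t$, yielding $X^+(P) = Y$. The main subtlety will be that a simple path $Q$ in $\G_Y$ may visit two different copies $u_Y^i, u_Y^j$ of the same underlying vertex $u$, so $P$ is only a walk a priori; I would dispense with this by reading ``$s$-$v$-path in $G\m F$ satisfying $X^+(P) = Y$'' as any such $s$-$v$-walk, which is all that is needed for \autoref{lemma:X_property} and the ensuing preserver construction, since both depend only on the walk length $d(s,v,G) + \sum_i y_i$ and the resulting upper bound on $d(s,v,G\m F)$.
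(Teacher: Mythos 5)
Your proof is correct and follows the same lift-and-project approach as the paper's: in the forward direction you lift the path $P$ into the layered graph $\G_Y$ by tracking how many positive entries of $X^+(P)$ have been consumed, and in the converse you project a path $Q$ in $\G_Y\m\phi^{-1}(F)$ down to $G$ via $\phi$. You additionally catch a subtlety that the paper's converse direction glosses over: a simple path $Q$ in $\G_Y$ may visit two distinct copies $u_Y^i$ and $u_Y^j$ of the same underlying vertex $u$ (in a directed graph with antiparallel edges this does happen, since inter-layer jumps can decrease $\level$), so $\phi(Q)$ is a priori only an $s$-$v$-walk rather than a simple path. Your resolution of reading ``path'' as ``walk'' in the lemma statement is the right one: the telescoping identity of \autoref{lemma:X_property} applies to arbitrary walks, and \autoref{lemma:H_Y} only needs the length of $\phi(Q)$, which equals $d(s,v,G)+\sum_i y_i$ for every $s_Y^0$-to-$v_Y^t$ path in $\G_Y$ regardless of simplicity, so the downstream preserver argument is unaffected.
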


\begin{proof}
First, suppose there is a path $P$ from $s$ to $v$ in $G\m F$ satisfying $X^+(P)=Y$.
Let $e_1=(a_1,b_1), \ldots, e_t=(a_t,b_t)$ be the edges in $P$ with $X^+(P_{v,F})_i = 1+\level(a_i) - \level(b_i)>0$.
Observe that the $e_i$ need not to be consecutive in $P$.
By assumption, for any $1 \le i \le t$,
the edge $(a_i,b_i)$ satisfies $1+\level(a_i)-\level(b_i)=y_i$,
whence edge $((a_i)^{i-1}_Y,(b_i)^i_Y)$ is present in $\G_Y\m \phi^{-1}(F)$.
Now for the remaining edges $(x,y)$ of $P$ with $(1+\level(x) - \level(y))=0$,
we have that $(x^i,y^i)$ exists in $\G_Y\m \phi^{-1}(F)$ for every $0 \le i \le t$.
This shows that there is a path from $s_{Y}^0$ to $v_Y^t$ in $\G_Y\m \phi^{-1}(F)$.

Conversely, suppose $Q=(w_1,\ldots,w_L)$ is a path from $s_{Y}^0$ to $v_Y^t$ in $\G_Y\m \phi^{-1}(F)$.
Then $\phi(Q)=((\phi(w_1),\ldots,$ $\phi(w_L))$ is a path from $s$ to $v$ in $G\m F$.
By the definition of graph $\G_Y$, the sequence $X^+(\phi(Q))$ is the same as $Y$.
\end{proof}

We designate $s_{Y}^0$ as the source of $\G_Y$. 
We compute a fault-tolerant reachability preserver $\H_Y$ of $\G_\Y$ that satisfies 
the condition that upon failure of any set $\F\subseteq E(\G_Y)$ of at most $f(k+1)$ edges,
the vertices reachable from $s_Y^0$ in $\G_Y\m \F$ is identical to vertices reachable from 
$s_Y^0$ in $\H_Y\m \F$.
Baswana, Choudhary, and Roditty~\cite{BCR16} showed how to compute, for any graph with $n$ vertices and $m$ edges,
a sparse  $f$-fault-tolerant reachability preserver in time $O(\min\{2^{f},fn\}\cdot mn)$,
in which the in-degree of each vertex is bounded by $2^{f}$.
Thus $\H_Y$ contains at most $(t+1)2^{f(t+1)}n\leq (k+1)2^{f(k+1)}n$ edges,
and is computable in $O(\min\{2^{f(k+1)},fkn\}\cdot k^2 mn)$ time.

For each $Y\in \Y$, we compute a subgraph $H_Y$ of $G$
by just applying projection map $\phi$ over edges of $\H_Y$.
In other words, for each edge $e=(a,b)$ in $G$, we include $e$ in $H_Y$ if and only if $\phi^{-1}(e)\cap E(\H_\Y)$ is non-empty.
Therefore, $H_Y$ contains at most $ (k+1)2^{f(k+1)}n$ edges.
The following lemma is immediate corollary of \autoref{lemma:calH_Y}.

\begin{lemma}
For any triplet $(v,F,Y)$, where  $v\in V$, $F\subseteq E$ has size at most $f$, and $Y=X^+(P_{v,F})$, 
the subgraph $H_Y\m F$ satisfies $d(s,v,H_Y\m F)=d(s,v,G\m F)$.
\label{lemma:H_Y}
\end{lemma}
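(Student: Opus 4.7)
The plan is to run a round-trip through the auxiliary graph $\G_Y$, using the two directions of Lemma~\ref{lemma:calH_Y} together with the fault-tolerance guarantee of the reachability preserver $\H_Y$. Since $H_Y$ is a subgraph of $G$, the inequality $d(s,v,H_Y\m F) \ge d(s,v,G\m F)$ is immediate, so I only need to exhibit a path of length $d(s,v,G\m F)$ from $s$ to $v$ in $H_Y\m F$.

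First, because $P_{v,F}$ is an $s$-$v$-path in $G\m F$ with $X^+(P_{v,F})=Y$, the forward direction of Lemma~\ref{lemma:calH_Y} gives a path from $s_Y^0$ to $v_Y^t$ in $\G_Y\m \phi^{-1}(F)$. The key quantitative observation is that $|\phi^{-1}(F)| \le (t+1)|F| \le (k+1)f$, which is exactly the number of failures against which the preserver $\H_Y$ was constructed to be robust. Hence there is a path $Q$ from $s_Y^0$ to $v_Y^t$ in $\H_Y\m \phi^{-1}(F)$.

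Next I would push $Q$ back to $G$ via the projection. Since $\H_Y \subseteq \G_Y$, the converse direction of Lemma~\ref{lemma:calH_Y} applies unchanged: $\phi(Q)$ is an $s$-$v$-path in $G\m F$ with $X^+(\phi(Q)) = Y$. Moreover, by the very definition of $H_Y$ as the $\phi$-image of $\H_Y$, every edge of $\phi(Q)$ lies in $H_Y$, and avoidance of $\phi^{-1}(F)$ by $Q$ translates into avoidance of $F$ by $\phi(Q)$; thus $\phi(Q)$ is an $s$-$v$-path in $H_Y\m F$.

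It remains to compute the length of $\phi(Q)$. The identity derived in the proof of Lemma~\ref{lemma:X_property} applies to \emph{any} $s$-$v$-walk, so
\[
\mathrm{length}(\phi(Q)) \;=\; d(s,v,G) + \sum_{y \in X^+(\phi(Q))} y \;=\; d(s,v,G) + \sum_{y \in Y} y.
\]
Applying Lemma~\ref{lemma:X_property} in the other direction to the shortest replacement path $P_{v,F}$ gives $\sum_{y\in Y} y = d(s,v,G\m F) - d(s,v,G)$, so $\mathrm{length}(\phi(Q)) = d(s,v,G\m F)$, finishing the proof. I expect no real obstacle: the only point requiring a moment of care is that the converse direction of Lemma~\ref{lemma:calH_Y} is stated for $\G_Y$, but its argument depends only on the projection $\phi$ and therefore extends verbatim to any subgraph of $\G_Y$, in particular to $\H_Y$.
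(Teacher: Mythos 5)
Your proof is correct and fills in the details that the paper compresses into the phrase ``immediate corollary of Lemma~\ref{lemma:calH_Y}''; the chain (forward direction of Lemma~\ref{lemma:calH_Y}, fault-tolerance of $\H_Y$ against $(k{+}1)f$ failures, projection back via $\phi$, length accounting via Lemma~\ref{lemma:X_property}) is exactly the intended argument. One tiny simplification: you do not need to argue that the converse direction of Lemma~\ref{lemma:calH_Y} ``extends'' to the subgraph $\H_Y$\hspace*{.25pt}---\hspace*{.25pt}since $\H_Y\m\phi^{-1}(F)\subseteq \G_Y\m\phi^{-1}(F)$, the path $Q$ already lives in $\G_Y\m\phi^{-1}(F)$, so the lemma applies as stated; the observation you actually need, and correctly supply, is that the $\phi$-image of an $\H_Y$-path lies in $H_Y$ by construction.
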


Finally, let $H$ be a graph obtained by taking union of edges in $H_Y$  for all $y\in \Y$.
There are at most $|\Y|(k+1)2^{f(k+1)}n$ edges in $H$, which is in $2^{O(fk)}n$.
By \Cref{lemma:X_property,lemma:H_Y}, 
and the definition of $\Y$ as the set of positive integer sequences with sum at most $k$,
it follows that $H$ is a $(f,k)$-\normalfont{FT-BFS}.
We obtain the following result.

\begin{theorem}[\autoref{thm:intro_FT-BFS} with explicit exponents]
For any parameters $f,k\geq 1$ and directed graph $G$ with $n$ vertices and $m$ edges, 
we can compute in time $O(\min\{2^{fk+f+k},2^k fkn\} \cdot k^2 mn)$
time an $(f,k)$-\normalfont{EFT-BFS} with $O(2^{fk+f+k} \nwspace k \cdot n)$ edges.
\end{theorem}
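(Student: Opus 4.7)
The plan is to assemble the $(f,k)$-\textnormal{EFT-BFS} as the edge union $H := \bigcup_{Y \in \Y} H_Y$ of the projected graphs constructed above, and then verify correctness alongside the size and time bounds by straightforward bookkeeping.

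For correctness, fix any query $(v,F)$ with $|F|\le f$ satisfying $d(s,v,G\m F) \le d(s,v,G)+k$, fix the replacement path $P_{v,F}$, and set $Y := X^+(P_{v,F})$. \autoref{lemma:X_property} says the positive entries of $Y$ sum to exactly $d(s,v,G\m F)-d(s,v,G)\le k$, so $Y \in \Y$. \autoref{lemma:H_Y} then yields $d(s,v,H_Y\m F) = d(s,v,G\m F)$, and since $H_Y \subseteq H$ (and distances cannot shrink when edges are added) the same equality holds in $H\m F$.

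For the size, first observe that $\Y$ is in bijection with the nonempty compositions of integers at most $k$ into positive parts, giving $|\Y|=2^k-1$. For any $Y \in \Y$ of length $t\le k$, the auxiliary graph $\G_Y$ has $(t+1)n=O(kn)$ vertices and $O(km)$ edges. Because a single edge failure in $G$ lifts, under $\phi^{-1}$, to at most $t+1\le k+1$ failures in $\G_Y$, the correct sensitivity to hand the Baswana-Choudhary-Roditty preserver~\cite{BCR16} is $f(k{+}1)$; this yields $\H_Y$ with $O((k{+}1)\,2^{f(k+1)}\,n)$ edges, and projecting via $\phi$ only identifies edges, so $H_Y$ inherits the same bound. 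Summing over $\Y$ gives
\[
|E(H)| \;\le\; (2^k-1)\cdot(k+1)\cdot 2^{f(k+1)}\cdot n \;=\; O\!\bigl(2^{fk+f+k}\,k\,n\bigr),
\]
as claimed.

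For the running time, each invocation of~\cite{BCR16} costs $O(\min\{2^{f(k+1)},\, f(k{+}1)^2 n\}\cdot(k{+}1)^2 mn) = O(\min\{2^{fk+f},\, fk^2 n\}\cdot k^2 mn)$, and the projection step is negligible; multiplying by $|\Y|=O(2^k)$ delivers the stated bound. The only point that requires real care is recognizing that the sensitivity parameter fed into the black-box preserver must be $f(k{+}1)$ rather than $f$, since otherwise \autoref{lemma:calH_Y} would not certify that $H_Y\m F$ preserves the $s$-$v$-distance for the given $Y$. Beyond this, everything reduces to routine accounting of the $(t{+}1)$-fold layering inside $\G_Y$.
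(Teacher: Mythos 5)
Your proposal reproduces the paper's own proof: the same union $H=\bigcup_{Y\in\Y}H_Y$, the same reduction to the layered graphs $\G_Y$, the same black-box invocation of the Baswana--Choudhary--Roditty reachability preserver with sensitivity $f(k{+}1)$, and the same appeal to \autoref{lemma:X_property} and \autoref{lemma:H_Y} for correctness. The only discrepancy is cosmetic: your time accounting yields $f'n' = O(fk^2 n)$ inside the $\min$ (so $2^k f k^2 n$ after multiplying by $|\Y|$), whereas the paper states $fkn$; your expression is the one that follows directly from the quoted $O(\min\{2^{f'},f'n'\}\cdot m'n')$ bound applied to $\G_Y$ with $n'=(k{+}1)n$, so if anything you are being more careful than the paper on that factor.
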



We now turn to the design of an $f$-edge fault-tolerant distance sensitivity oracles that, when queried on a pair $(t,F)$, where $v$ is a vertex of $G$ and $F$ is a set of at most $f$ edges of $G$, is able to decide whether $d(s,t,G-F) \leq d(s,t,G)+k$ or not. 
We use the Monte Carlo fault-tolerant reachability oracle $\D$ designed by Brand and Saranurak~\cite{BS19} for directed graphs. This oracle, when queried on a triple $(s,t,F)$, where $s$ and $t$ are two vertices of $G$ and $F$ is a subset of vertices and edges of $G$, reports whether or not $t$ is reachable from $s$ in $G-F$ in $O(|F|^\omega)$ time.
The preprocessing time of $\D$ is $O(n^\omega)$ and the space\footnote{%
	In~\cite{BS19}, this is phrased as $O(n^2 \log n)$ \emph{bits} of space.
} 
in $O(n^2)$. We must note here that $\D$ is a randomized Monte Carlo oracle and the answer to each query is correct with high probability, i.e. the oracle
may return incorrect answer with probability at most $1/n^c$,
for any constant $c$.

We compute such an oracle for each of the graphs $\G_Y$, $Y\in \Y$, 
that is resilient to up to $f(k+1)$ failures. 
The time to compute $\D(G_Y)$ for each $y$ is 
$O(|\Y| (kn)^\omega) = O(2^k (kn)^\omega)$.
The total size of our data structure is $O(2^k (kn)^2)$.
The algorithm to check for a query pair $(v,F)$, 
whether or not the $s$-$v$-distance in $G$ and $G\m F$
differ by an additive term of at most $k$ is as follows. 
For each $Y=(y_1,\ldots,y_t)\in \Y$, we use $\D(\G_Y)$ to check if 
there is a path from $s_{Y}^0$ to $v_Y^t$ in $\G_Y\m \phi^{-1}(F)$.
If the answer is negative for all choices of $Y$,
we report that $s$-$v$-distance increases by more than $k$,
otherwise we report that the distances differ by at most $k$.
The query time of our oracle is $O(|\Y|~|\phi^{-1}(F)|^\omega)$
which is at most $O(2^k (fk)^\omega)$.
The correctness follows from Lemma~\ref{lemma:calH_Y}.
Our oracle is randomized Monte Carlo since the reachability oracle by Brand and Saranurak~\cite{BS19} is inherently randomized.
This concludes the proof of \autoref{thm:intro_oracle_stretched_distances} (restated below).

\oraclestretcheddistances*

\bibliographystyle{plainurl} 
\bibliography{F(P)T_bib}

\begin{thebibliography}{10}

\bibitem{AhmedBSHJKS20}
Abu~Reyan Ahmed, Greg Bodwin, Faryad~Darabi Sahneh, Keaton Hamm, Mohammad
  Javad~Latifi Jebelli, Stephen~G. Kobourov, and Richard Spence.
\newblock Graph spanners: {A} tutorial review.
\newblock {\em Computer Science Review}, 37:100253, 2020.
\newblock \href {https://doi.org/10.1016/j.cosrev.2020.100253}
  {\path{doi:10.1016/j.cosrev.2020.100253}}.

\bibitem{dynamicfpt}
Josh Alman, Matthias Mnich, and Virginia~Vassilevska Williams.
\newblock {Dynamic Parameterized Problems and Algorithms}.
\newblock {\em {ACM} Transactions on Algorithms}, 16:45:1--45:46, 2020.
\newblock \href {https://doi.org/10.1145/3395037} {\path{doi:10.1145/3395037}}.

\bibitem{AlmanVWilliams21RefinedLaserMethod}
Josh Alman and Virginia {Vassilevska Williams}.
\newblock {A Refined Laser Method and Faster Matrix Multiplication}.
\newblock In {\em Proceedings of the 32nd Symposium on Discrete Algorithms
  (SODA)}, pages 522--539, 2021.
\newblock \href {https://doi.org/10.1137/1.9781611976465.32}
  {\path{doi:10.1137/1.9781611976465.32}}.

\bibitem{BCR16}
Surender Baswana, Keerti Choudhary, and Liam Roditty.
\newblock {Fault Tolerant Subgraph for Single Source Reachability: Generic and
  Optimal}.
\newblock In {\em Proceedings of the 48th Symposium on Theory of Computing
  (STOC)}, pages 509--518, 2016.
\newblock \href {https://doi.org/10.1145/2897518.2897648}
  {\path{doi:10.1145/2897518.2897648}}.

\bibitem{BezakovaCDF19Journal}
Ivona Bez{\'{a}}kov{\'{a}}, Radu Curticapean, Holger Dell, and Fedor~V. Fomin.
\newblock {Finding Detours is Fixed-Parameter Tractable}.
\newblock {\em {SIAM} Journal on Discrete Mathematics}, 33:2326--2345, 2019.
\newblock \href {https://doi.org/10.1137/17M1148566}
  {\path{doi:10.1137/17M1148566}}.

\bibitem{Bilo21Space-efficient}
Davide Bilò, Sarel Cohen, Tobias Friedrich, and Martin Schirneck.
\newblock {Space-Efficient Fault-Tolerant Diameter Oracles}.
\newblock In {\em Proceedings of the 46th International Symposium on
  Mathematical Foundations of Computer Science (MFCS)}, pages 18:1--18:16,
  2021.
\newblock \href {https://doi.org/10.4230/LIPIcs.MFCS.2021.18}
  {\path{doi:10.4230/LIPIcs.MFCS.2021.18}}.

\bibitem{BCFS21SingleSourceDSO}
Davide Bil{\`{o}}, Sarel Cohen, Tobias Friedrich, and Martin Schirneck.
\newblock {Near-Optimal Deterministic Single-Source Distance Sensitivity
  Oracles}.
\newblock In {\em Proceedings of the 29th European Symposium on Algorithms
  (ESA)}, pages 18:1--18:17, 2021.
\newblock \href {https://doi.org/10.4230/LIPIcs.ESA.2021.18}
  {\path{doi:10.4230/LIPIcs.ESA.2021.18}}.

\bibitem{BodwinDR21}
Greg Bodwin, Michael Dinitz, and Caleb Robelle.
\newblock {Optimal Vertex Fault-Tolerant Spanners in Polynomial Time}.
\newblock In {\em Proceedings of the 32nd Symposium on Discrete Algorithms
  (SODA)}, pages 2924--2938, 2021.
\newblock \href {https://doi.org/10.1137/1.9781611976465.174}
  {\path{doi:10.1137/1.9781611976465.174}}.

\bibitem{BGPW17}
Greg Bodwin, Fabrizio Grandoni, Merav Parter, and Virginia~Vassilevska
  Williams.
\newblock {Preserving Distances in Very Faulty Graphs}.
\newblock In {\em Proceedings of the 44th International Colloquium on Automata,
  Languages, and Programming, (ICALP)}, pages 73:1--73:14, 2017.
\newblock \href {https://doi.org/10.4230/LIPIcs.ICALP.2017.73}
  {\path{doi:10.4230/LIPIcs.ICALP.2017.73}}.

\bibitem{bussrule}
Jonathan~F. Buss and Judy Goldsmith.
\newblock {Nondeterminism Within P}.
\newblock {\em {SIAM} Journal on Computing}, 22:560--572, 1993.
\newblock \href {https://doi.org/10.1137/0222038} {\path{doi:10.1137/0222038}}.

\bibitem{Chechik17ApproximateSensitiveDistanceOracles}
Shiri Chechik, Sarel Cohen, Amos Fiat, and Haim Kaplan.
\newblock {$(1+\epsilon)$-Approximate $f$-Sensitive Distance Oracles}.
\newblock In {\em Proceedings of the 28th Symposium on Discrete Algorithms
  (SODA)}, pages 1479--1496, 2017.
\newblock \href {https://doi.org/10.1137/1.9781611974782.96}
  {\path{doi:10.1137/1.9781611974782.96}}.

\bibitem{bestvcfpt}
Jianer Chen, Iyad~A. Kanj, and Ge~Xia.
\newblock {Improved Upper Bounds for Vertex Cover}.
\newblock {\em Theoretical Computer Science}, 411:3736--3756, 2010.
\newblock \href {https://doi.org/10.1016/j.tcs.2010.06.026}
  {\path{doi:10.1016/j.tcs.2010.06.026}}.

\bibitem{dynamickpath}
Jiehua Chen, Wojciech Czerwinski, Yann Disser, Andreas~E. Feldmann, Danny
  Hermelin, Wojciech Nadara, Marcin Pilipczuk, Michał Pilipczuk, Manuel Sorge,
  Bartlomiej Wr{\'{o}}blewski, and Anna Zych{-}Pawlewicz.
\newblock {Efficient Fully Dynamic Elimination Forests with Applications to
  Detecting Long Paths and Cycles}.
\newblock In {\em Proceedings of the 32nd Symposium on Discrete Algorithms
  (SODA)}, pages 796--809, 2021.
\newblock \href {https://doi.org/10.1137/1.9781611976465.50}
  {\path{doi:10.1137/1.9781611976465.50}}.

\bibitem{Cygan15ParamertizedAlgorithms}
Marek Cygan, Fedor~V. Fomin, {\L}ukasz Kowalik, Daniel Lokshtanov, D{\'{a}}niel
  Marx, Marcin Pilipczuk, Micha{\l} Pilipczuk, and Saket Saurabh.
\newblock {\em Parameterized Algorithms}.
\newblock Springer, Cham, Switzerland, 2015.
\newblock \href {https://doi.org/10.1007/978-3-319-21275-3}
  {\path{doi:10.1007/978-3-319-21275-3}}.

\bibitem{DowneyFellows13Parameterized}
Rodney~G. Downey and Michael~R. Fellows.
\newblock {\em Fundamentals of Parameterized Complexity}.
\newblock Texts in Computer Science. Springer, London, UK, 2013.
\newblock \href {https://doi.org/10.1007/978-1-4471-5559-1}
  {\path{doi:10.1007/978-1-4471-5559-1}}.

\bibitem{Fomin2018LongD}
Fedor~V. Fomin, Daniel Lokshtanov, Fahad Panolan, Saket Saurabh, and Meirav
  Zehavi.
\newblock {Long Directed $(s,t)$-path: FPT Algorithm}.
\newblock {\em Information Processing Letters}, 140:8--12, 2018.
\newblock \href {https://doi.org/10.1016/j.ipl.2018.04.018}
  {\path{doi:10.1016/j.ipl.2018.04.018}}.

\bibitem{Fomin2014EfficientCO}
Fedor~V. Fomin, Daniel Lokshtanov, and Saket Saurabh.
\newblock {Efficient Computation of Representative Sets with Applications in
  Parameterized and Exact Algorithms}.
\newblock In {\em Proceedings of the 25th Symposium on Discrete Algorithms
  (SODA)}, pages 142--151, 2014.
\newblock \href {https://doi.org/10.1137/1.9781611973402.10}
  {\path{doi:10.1137/1.9781611973402.10}}.

\bibitem{GrandoniVWilliamsFasterRPandDSO}
Fabrizio Grandoni and Virginia {Vassilevska Williams}.
\newblock {Faster Replacement Paths and Distance Sensitivity Oracles}.
\newblock {\em ACM Transaction on Algorithms}, 16:15:1--15:25, 2020.
\newblock \href {https://doi.org/10.1145/3365835} {\path{doi:10.1145/3365835}}.

\bibitem{GuRen21ConstructingDSO}
Yong Gu and Hanlin Ren.
\newblock {Constructing a Distance Sensitivity Oracle in $O(n^{2.5794}M)$
  Time}.
\newblock In {\em Proceedings of the 48th International Colloquium on Automata,
  Languages, and Programming (ICALP)}, pages 76:1--76:20, 2021.
\newblock \href {https://doi.org/10.4230/LIPIcs.ICALP.2021.76}
  {\path{doi:10.4230/LIPIcs.ICALP.2021.76}}.

\bibitem{Hagerup_Miltersen_Pagh_2001}
Torben Hagerup, Peter~Bro Miltersen, and Rasmus Pagh.
\newblock {Deterministic Dictionaries}.
\newblock {\em Journal of Algorithms}, 41:69–85, 2001.
\newblock \href {https://doi.org/10.1006/jagm.2001.1171}
  {\path{doi:10.1006/jagm.2001.1171}}.

\bibitem{dynamicfpt2}
Yoichi Iwata and Keigo Oka.
\newblock {Fast Dynamic Graph Algorithms for Parameterized Problems}.
\newblock In {\em Proceedings of the 14th Scandinavian Workshop on Algorithm
  Theory (SWAT)}, pages 241--252, 2014.
\newblock \href {https://doi.org/10.1007/978-3-319-08404-6_21}
  {\path{doi:10.1007/978-3-319-08404-6_21}}.

\bibitem{lochet}
William Lochet, Daniel Lokshtanov, Pranabendu Misra, Saket Saurabh, Roohani
  Sharma, and Meirav Zehavi.
\newblock {Fault Tolerant Subgraphs with Applications in Kernelization}.
\newblock In {\em Proceeings of the 11th Innovations in Theoretical Computer
  Science Conference (ITCS)}, pages 47:1--47:22, 2020.
\newblock \href {https://doi.org/10.4230/LIPIcs.ITCS.2020.47}
  {\path{doi:10.4230/LIPIcs.ITCS.2020.47}}.

\bibitem{faulttfvs}
Pranabendu Misra.
\newblock {On Fault Tolerant Feedback Vertex Set}.
\newblock {\em CoRR}, abs/2009.06063, 2020.
\newblock ArXiv preprint.
\newblock URL: \url{https://arxiv.org/abs/2009.06063}.

\bibitem{Mitzenmacher_Upfal_2017}
Michael Mitzenmacher and Eli Upfal.
\newblock {\em Probability and Computing}.
\newblock Cambridge University Press, New York, NY, USA, 2nd edition, 2017.

\bibitem{Niedermeier06Invitation}
Rolf Niedermeier.
\newblock {\em Invitation to Fixed-Parameter Algorithms}.
\newblock Oxford Lecture Series in Mathematics and Its Applications. Oxford
  University Press, Oxford, UK, 2006.
\newblock \href {https://doi.org/10.1093/acprof:oso/9780198566076.001.0001}
  {\path{doi:10.1093/acprof:oso/9780198566076.001.0001}}.

\bibitem{Parter15}
Merav Parter.
\newblock {Dual Failure Resilient BFS Structure}.
\newblock In {\em Proceedings of the 2015 Symposium on Principles of
  Distributed Computing (PODC)}, pages 481--490, 2015.
\newblock \href {https://doi.org/10.1145/2767386.2767408}
  {\path{doi:10.1145/2767386.2767408}}.

\bibitem{ParterP13}
Merav Parter and David Peleg.
\newblock {Sparse Fault-Tolerant BFS Trees}.
\newblock In {\em Proceedings of the 21st European Symposium on Algorithms
  (ESA)}, pages 779--790, 2013.
\newblock \href {https://doi.org/10.1007/978-3-642-40450-4_66}
  {\path{doi:10.1007/978-3-642-40450-4_66}}.

\bibitem{ParterP16}
Merav Parter and David Peleg.
\newblock {Sparse Fault-Tolerant {BFS} Structures}.
\newblock {\em {ACM} Transaction on Algorithms}, 13:11:1--11:24, 2016.
\newblock \href {https://doi.org/10.1145/2976741} {\path{doi:10.1145/2976741}}.

\bibitem{Pinter2014DeterministicPA}
Ron~Y. Pinter, Hadas Shachnai, and Meirav Zehavi.
\newblock {Deterministic Parameterized Algorithms for the Graph Motif Problem}.
\newblock In {\em Proceedings of the 39th International Symposium on
  Mathematical Foundations of Computer Science (MFCS)}, pages 589--600, 2014.
\newblock \href {https://doi.org/10.1007/978-3-662-44465-8_50}
  {\path{doi:10.1007/978-3-662-44465-8_50}}.

\bibitem{Ren22Improved}
Hanlin Ren.
\newblock {Improved Distance Sensitivity Oracles with Subcubic Preprocessing
  Time}.
\newblock {\em Journal of Computer and System Sciences}, 123:159--170, 2022.
\newblock \href {https://doi.org/10.1016/j.jcss.2021.08.005}
  {\path{doi:10.1016/j.jcss.2021.08.005}}.

\bibitem{Tsur_2019}
Dekel Tsur.
\newblock {Faster Deterministic Parameterized Algorithm for $k$-Path}.
\newblock {\em Theoretical Computer Science}, 790:96–104, 2019.
\newblock \href {https://doi.org/10.1016/j.tcs.2019.04.024}
  {\path{doi:10.1016/j.tcs.2019.04.024}}.

\bibitem{Uehara_Uno_2007}
Ryuhei Uehara and Yushi Uno.
\newblock {On Computing Longest Paths in Small Graph Classes}.
\newblock {\em International Journal of Foundations of Computer Science},
  18:911--930, 2007.
\newblock \href {https://doi.org/10.1142/S0129054107005054}
  {\path{doi:10.1142/S0129054107005054}}.

\bibitem{BS19}
Jan van~den Brand and Thatchaphol Saranurak.
\newblock {Sensitive Distance and Reachability Oracles for Large Batch
  Updates}.
\newblock In {\em Proceedings of the 60th Symposium on Foundations of Computer
  Science (FOCS)}, pages 424--435, 2019.
\newblock \href {https://doi.org/10.1109/FOCS.2019.00034}
  {\path{doi:10.1109/FOCS.2019.00034}}.

\bibitem{Weimann_Yuster_2013}
Oren Weimann and Raphael Yuster.
\newblock {Replacement Paths and Distance Sensitivity Oracles via Fast Matrix
  Multiplication}.
\newblock {\em ACM Transactions on Algorithms}, 9:14:1--14:13, 2013.
\newblock \href {https://doi.org/10.1145/2438645.2438646}
  {\path{doi:10.1145/2438645.2438646}}.

\bibitem{Williams_2009}
R.~Ryan Williams.
\newblock {Finding Paths of Length \(k\) in \(O^*(2^k)\) Time}.
\newblock {\em Information Processing Letters}, 109:315–318, 2009.
\newblock \href {https://doi.org/10.1016/j.ipl.2008.11.004}
  {\path{doi:10.1016/j.ipl.2008.11.004}}.

\end{thebibliography}

\end{document}